\newtheorem{theorem}{Theorem}
\newtheorem{lemma}{Lemma}
\newtheorem{proof}{Proof}
\begin{document}

\title{When mmWave Communications Meet Network Densification: A Scalable Interference Coordination Perspective}

\author{Wei~Feng,~\IEEEmembership{Member,~IEEE,}
        Yanmin~Wang,
        Dengsheng~Lin,~\IEEEmembership{Member,~IEEE,}
        Ning~Ge,~\IEEEmembership{Member,~IEEE,}
        Jianhua~Lu,~\IEEEmembership{Fellow,~IEEE,}
        and~Shaoqian~Li,~\IEEEmembership{Fellow,~IEEE}
\thanks{This work was supported in part by the National Basic Research Program of China under grant No. 2013CB329001, and
the National Science Foundation of China under grant No.~91638205 and grant No.~61621091.}
\thanks{W. Feng, N. Ge, and J. Lu are with the Tsinghua National Laboratory
for Information Science and Technology, Tsinghua University, Beijing, China~(email:
fengw@mails.tsinghua.edu.cn, gening@tsinghua.edu.cn, lhh-dee@tsinghua.edu.cn).}
\thanks{Y. Wang is with the China Academy of Electronics and Information Technology, Beijing, China~(email:
yanmin-226@163.com).}
\thanks{D. Lin and S. Li are with the National Key Laboratory of Science
and Technology on Communications, University of Electronic Science and
Technology of China, Sichuan, China~(email: linds@uestc.edu.cn,
lsq@uestc.edu.cn).}
\thanks{Digital Object Identifier XXX.}
}

\markboth{IEEE JOURNAL ON SELECTED AREAS IN COMMUNICATIONS,~Vol.~xx, No.~x, 2017}
{Feng \MakeLowercase{\textit{et al.}}: When mmWave Communications Meet Network Densification: A Scalable Interference Coordination Perspective}

\maketitle

\begin{abstract}
The millimeter-wave (mmWave) communication is envisioned to provide orders of magnitude capacity improvement.
However, it is challenging to realize a sufficient link margin due to high path loss and blockages.
To address this difficulty, in this paper, we explore the potential gain of ultra-densification for enhancing mmWave communications from a network-level perspective.
By deploying the mmWave base stations (BSs) in an extremely dense and amorphous fashion, the access distance is reduced and
the choice of serving BSs is enriched for each user, which are intuitively effective
for mitigating the propagation loss and blockages. Nevertheless, co-channel interference under this model will
become a performance-limiting factor. To solve this problem, we propose a
large-scale channel state information (CSI) based interference coordination approach. Note that
the large-scale CSI is highly location-dependent, and can be obtained with a quite low cost. Thus, the
scalability of the proposed coordination framework can be guaranteed.
Particularly, using only the large-scale CSI of interference links, a coordinated frequency resource block allocation problem is formulated for maximizing the minimum achievable rate of the users, which is uncovered to be a NP-hard integer programming problem.
To circumvent this difficulty, a greedy
scheme with polynomial-time complexity is proposed by adopting the bisection method and linear integer programming tools.
Simulation results demonstrate that
the proposed coordination scheme based on large-scale CSI only can still offer
substantial gains over the existing methods.
Moreover, although the proposed scheme is only guaranteed to converge to a local optimum,
it performs well in terms of both user fairness and system efficiency.
\end{abstract}

\begin{IEEEkeywords}
Millimeter-wave (mmWave) communication, network densification, interference coordination, large-scale channel state information (CSI), linear integer programming.
\end{IEEEkeywords}

\section{Introduction}
\IEEEPARstart{T}{he} future generation (5G and beyond) wireless communication system is envisioned to serve a massive amount of users, while satisfying
various demands of multimedia services. Towards this end, a lot of emerging or evolutionary technologies have been
extensively studied~\cite{r01}. In particular, the millimeter-wave (mmWave) communication
has been recognized as a promising technology for providing orders of magnitude capacity improvement~\cite{r02}-\cite{rxiao2}, due to the large bandwidth
available at mmWave bands.

The key difficulty of mmWave communications is the severe propagation attenuation caused by
high path loss, shadowing and blockages~\cite{r03}-\cite{rxiao3}. In practical applications, the transmit power is usually quite limited, due to the hardware constraint of mmWave transceivers~\cite{r04}, which further renders it challenging to realize a sufficient link margin~\cite{r05}.

To deal with this problem, most of the existing research efforts have been devoted to link-level
directional beamforming design~\cite{r06}-\cite{r010}, aiming at energy-focusing wireless transmission.
This approach is valid, as it becomes feasible to adopt large antenna arrays at the mmWave transceivers,
thanks to small wavelengths of mmWave signals. Further towards practice,
by considering the combination of analog and digital processing with a limited number of
radio frequency (RF) chains, the total power consumption as well as the hardware cost can be significantly reduced.
Specifically, in~\cite{r06}, an iterative hybrid analog/digital beamforming algorithm was proposed for the single-user mmWave channel.
The same authors further developed an adaptive algorithm
to estimate the mmWave channel via designing a hierarchical multi-resolution
codebook. Using the estimated channel, a hybrid precoding algorithm
was proposed in~\cite{r07}. In~\cite{r08}, by adopting the theory of matrices, a joint RF and baseband precoder was contrived, which has low
computational complexity, and enables the parallel hardware implementation.
To uncover the optimal design of hybrid precoders, the authors of~\cite{r09} formulated
the precoder optimization problem as a matrix factorization problem, and an
innovative design methodology was proposed based on the idea of alternating minimization. Further considering the generalized frequency-selective channel,
the hybrid analog/digital codebooks were developed for wideband mmWave systems in~\cite{r010}.

The aforementioned link-level results are quite insightful, however, we still face challenges.
First, the mmWave channel is measured to be sparse in terms of multi-path components~\cite{r011}, which would inevitably
lead to small degrees of freedom of the link, thus limiting the performance of multi-antenna precoding.
Second, although the antenna array can be tightly packed at the transceiver,
real-time beam training is crucially required to gain the benefit of antenna arrays, which may
cause an unaffordable cost at user equipments and greatly limit user mobility~\cite{r04}. Third,
mmWave communications are quite sensitive to blockages~\cite{r04}, however, link-level precoding
makes little sense to the severe outage resulted by a stubborn blockage~\cite{r016}.

To address these challenges, in this paper, we promote mmWave communications from a network-level perspective.
Particularly, we consider an extremely dense and amorphous mmWave network~\cite{r014}\cite{r015}, where a large amount of mmWave base stations (BSs)
are randomly deployed in an amorphous way in the coverage area. This assumption follows the fact that
the deployment of multiple mmWave BSs will not be cellular-like regular due to limited sites\footnote{Taking the
maritime mmWave communication as an example, for which the BSs can only be deployed along the coast or on an available island.} and the target towards mitigation of blockages.
Furthermore, some mmWave BSs may be randomly deployed by subscribers~\cite{r017}, which also leads to an amorphous system topology.
The key advantage of this model lies in the greatly reduced access distance and enriched choice of serving BSs for each user~\cite{r3}-\cite{rfeng1},
which intuitively are more straightforward pathways to mitigate the propagation loss and blockages.
For instance, the authors of~\cite{r012} have shown that the cooperation among multiple mmWave transmitters is effective for
alleviating the link fluctuation problem. Coming to cellular mmWave communications, an enhanced coverage performance was
identified in recent studies~\cite{r016}\cite{r013}.
Thanks to this superiority, we further liberate the BSs and user equipments from the heavy cost of beam training. To be specific, only a single RF chain and only the analog beamforming, i.e., phase shifting, are assumed at the BS.
Meanwhile, only a single omnidirectional antenna is equipped at the mobile terminals. These assumptions are more practical
than the existing studies~\cite{r06}-\cite{r010} for the future mobile mmWave communication network.

Under this model, the co-channel interference is a critical performance-limiting factor~\cite{r04}.
Different from traditional studies that
use full channel state information (CSI) for interference management~\cite{r012},~\cite{r8}-\cite{r16}, we assume
only the slowly varying large-scale CSI is available for each interference link~\cite{rfeng2}\cite{rfeng3}.
Note that the acquisition of full CSI will lead to a huge cost, including the pilot
overhead, the extra processor for interference-channel estimation, as well as the feedback
overhead for frequency division duplex regimes, which makes it not viable for practical applications~\cite{r18}-\cite{r17}.
In a sharp contrast, the large-scale CSI is highly location-dependent, and can be obtained
with a quite low cost~\cite{r016}. Therefore, the scalability of interference coordination
can be guaranteed, which is crucial for the considered mmWave network. Intuitively, due to the amorphous
system topology, it is difficult to decouple the considered system with an acceptable performance loss\footnote{This is quite different from
the traditional dense cellular system, where coordination among only a limited number of BSs is usually sufficient for improving the cell-edge users' performance, e.g., the coordinated multipoint transmission and reception technology in LTE-Advanced standard~\cite{r18}, and the global coordination can be decoupled into local coordinations with an acceptable performance loss~\cite{r12}.}. Consequently, a scalable interference coordination scheme is desired.

Using only the large-scale CSI of interference links, a coordinated frequency resource block (FRB) allocation problem is formulated for maximizing the minimum achievable rate of all the users. Our analysis reveals that the optimization problem is NP-hard in general.
To circumvent this difficulty, a greedy
scheme with polynomial-time complexity is proposed by adopting the bisection method and linear integer programming tools.
Simulation results demonstrate that
interference coordination based on large-scale CSI only can still provide
substantial gains over the existing methods.
Moreover, although the proposed scheme is only guaranteed to converge to a local optimum,
its performance is not far from that of the global optimal solution, and
it performs well in terms of both user fairness and system efficiency.

The rest of this paper is organized as follows. Section II introduces the system model.
The problem of large-scale CSI based interference coordination is formulated and analyzed
in Section III, and
a low-complexity coordinated FRB allocation scheme is proposed in Section IV.
Section V shows the simulation results, and finally, conclusions are
given in Section VI.

Throughout this paper, lower case and upper case boldface symbols
denote vectors and matrices, respectively. $\mathcal{CN}(0, \sigma^2)$ denotes the
complex Gaussian distribution with zero mean and $\sigma^2$ variance.
$(\cdot)^H$ and $(\cdot)^T$ represent the transpose conjugate and the transpose, respectively.
$\mathbf{E}(\cdot)$ is the expectation operator.

\section{System Model}
As illustrated in Fig.~\ref{fig1}, we consider a two-tier network consisting
of a traditional microwave BS with wide-area coverage and a massive amount of
mmWave BSs~\cite{r20}. The microwave BS provides global control of the whole system, thus enabling the network-level interference coordination.
The mmWave BSs are densely deployed in an amorphous way, which support high-capacity transmission.
In practice, the deployment of mmWave BSs mainly depends on the geographical distribution of target users and environmental blockages, as well as candidate BS sites.
Similar to the traditional dense cellular system with distributed antennas~\cite{rfeng6}-\cite{rfeng8}, exclusive optical fibers are adopted in this model to connect the microwave BS and all the mmWave BSs\footnote{The system backhaul may be limited in some scenarios, e.g., when using the standardized interface named X2~\cite{r18}. Then, the impact of limited backhaul capability
should be taken into consideration, which is out of the scope of this paper, however is an interesting future research topic.}.
Although the mmWave bandwidth is large, it is not feasible to allocate orthogonal FRBs for all the BSs, considering
the ever-increasing user demand. We assume $N$ frequency division clusters (FDCs), each of which is formed by a group of neighboring mmWave BSs.
Given the association relationship~\cite{r015}, we consider
$K$ `BS$\rightarrow$user' pairs in each FDC, which are allocated with orthogonal FRBs,
to mitigate strong interferences\footnote{Note that $K$ is an important parameter for the system, which should be determined according to
the density of mmWave BSs, users' requirement, and available FRBs. So far the optimization of $K$ is still an open issue. Furthermore, given $K$, the optimal formation of FDCs is
also worth studying.}.
However, these $K$ FRBs are reused among different FDCs, causing co-channel interference.
We adopt a uniform linear array (ULA) with $N_a$ antenna elements and a single RF chain at each BS. Accordingly only the analog beamforming is required. Meanwhile,
to reduce the cost at user equipments and
support high user mobility, only a single omnidirectional antenna is assumed at each mobile terminal.

\begin{figure}[t]
\begin{center}
  \includegraphics[width=8.8cm]{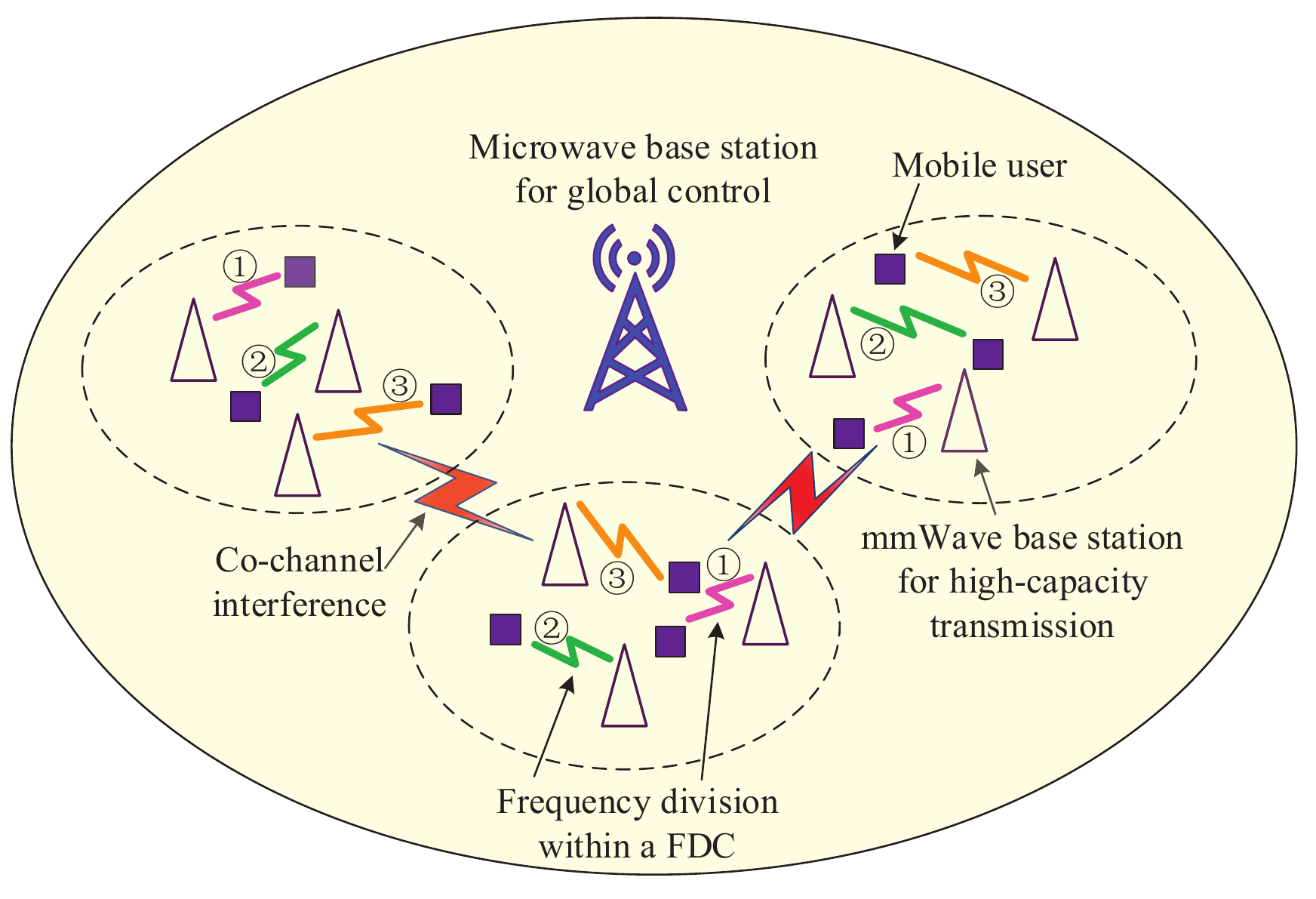}
  \caption{Illustration of an extremely dense and amorphous mmWave network.}
  \label{fig1}
\end{center}
\end{figure}

Let $\{z_{k,l}^{(n)} \, | \, z_{k,l}^{(n)}\in \{0,1\}, k, l=1,2,...,K, n = 1,2,...,N\}$
denote the FRB allocation indicators. If FRB $l$ is allocated to user $k$ in FDC $n$, then $z_{k,l}^{(n)}=1$; otherwise
$z_{k,l}^{(n)}=0$. To mitigate strong interferences within each FDC, the FRB allocation indicators satisfy
\begin{eqnarray}\label{eq:e2_1}
&&\sum_{l=1}^{K} z_{k,l}^{(n)}=1, \,\, \sum_{k=1}^{K} z_{k,l}^{(n)}=1,\nonumber \\
&&k, l=1,2,...,K, n = 1,2,...,N.
\end{eqnarray}

We focus on the downlink system and consider only the azimuth~\cite{r07}. The parameterized
geometric mmWave channel model with limited number of
scatterers is adopted~\cite{r07}. Without loss of generality, each scatterer contributes a single propagation path between the BS and user~\cite{r07}.
Accordingly, we can express the channel vector between BS $i$ in FDC $m$ and user $k$ in FDC $n$ in FRB $l$ as
\begin{eqnarray}\label{eq:add1}
&&\mathbf h_{l,k,i}^{(n,m)}=(\eta_{k,i}^{(n,m)})^{-1/2} \times \nonumber\\
&&\left[ \begin{array}{lll}
\sum_{\iota = 1}^{L_{k,i}^{(n,m)}} \alpha_{\iota,l,k,i}^{(n,m)} \\
\sum_{\iota = 1}^{L_{k,i}^{(n,m)}} \alpha_{\iota,l,k,i}^{(n,m)} e^{j(2\pi/\lambda)\tau \sin(\phi_{\iota,k,i}^{(n,m)})} \\
 \vdots  \\
\sum_{\iota = 1}^{L_{k,i}^{(n,m)}} \alpha_{\iota,l,k,i}^{(n,m)} e^{j(N_a-1)(2\pi/\lambda)\tau \sin(\phi_{\iota,k,i}^{(n,m)})}
\end{array} \right],
\end{eqnarray}
where $\eta_{k,i}^{(n,m)}$ denotes the path loss, $L_{k,i}^{(n,m)}$ represents the number of scatterers, $\alpha_{\iota,l,k,i}^{(n,m)}\sim \mathcal{CN}(0,1)$ denotes
the small-scale Rayleigh fading generated by path $\iota$, $\lambda$ is the signal wavelength, $\tau$ represents
the distance between antenna elements, and $\phi_{\iota,k,i}^{(n,m)}\in [0,2\pi]$ denotes the azimuth angle of departure (AoD) for path $\iota$.
For modeling the path loss, we further have
\begin{eqnarray}
\eta_{k,i}^{(n,m)}[dB] = \nu_{k,i}^{(n,m)} + \beta_{k,i}^{(n,m)} 10 \log_{10}(d_{k,i}^{(n,m)})+\xi_{k,i}^{(n,m)},
\end{eqnarray}
where $d_{k,i}^{(n,m)}$ is the distance, and $\xi_{k,i}^{(n,m)}\sim \mathcal{CN}(0,(o_{k,i}^{(n,m)})^2)$ denotes the lognormal shadowing. $\nu_{k,i}^{(n,m)}$, and $\beta_{k,i}^{(n,m)}$, as well as
$(o_{k,i}^{(n,m)})^2$ take different values according to the channel status, i.e., whether containing a line-of-sight (LOS) path, whether encountering a blockage, and so on~\cite{r011}.

We categorize these channel parameters into large-scale and small-scale ones, as shown in Table~\ref{tab1}. To control the cost for CSI acquisition, we assume only the large-scale CSI is available
for all the interference links. Consequently, the network-level interference coordination can make use of the large-scale CSI only.

\begin{table}
  \centering
  \caption{Large-scale CSI versus small-scale CSI in the geometric mmWave channel model.}
  \label{tab1}
\begin{tabular}{|l|l|l|}
  \hline
    \textbf{Category} & \textbf{Parameters} & \textbf{Property}\\
  \hline
  \emph{Large-scale } & $\bullet$ Path loss: $\eta_{k,i}^{(n,m)}$   & Varies slowly on\\
   CSI & $\bullet$ Number of scatterers: $L_{k,i}^{(n,m)}$ & the order of user's\\
   & $\bullet$ AoDs: $\phi_{\iota,k,i}^{(n,m)}$ & location change\\
  \hline
  \emph{Small-scale } & $\bullet$ Rayleigh fading: $\alpha_{\iota,l,k,i}^{(n,m)}$ & Varies quickly \\
  CSI &  & on the order of \\
   &  & a signal wavelength\\
  \hline
\end{tabular}
\end{table}

\section{Large-Scale CSI based Interference Coordination}
\subsection{Problem Formulation}
For FRB $l$ in FDC $n$, the received signal can be expressed as
\begin{eqnarray}\label{eq:e2_2}
&&\!\!\!\!\!\!\!\!\!\!\!\!\!\!\!\!\!\! y_l^{(n)} = \sum_{k=1}^{K}  (\mathbf h_{l,k,k}^{(n,n)})^T \mathbf w_{l,k,k}^{(n,n)} x_{k}^{(n)} z_{k,l}^{(n)} \nonumber \\
&&\!\!\!\!\!\!\!\!\!\!\!\!\!\!\!\!\!\! + \sum_{k=1}^{K} \left( \sum_{m=1,m\neq n}^{N}   \sum_{i=1}^{K} (\mathbf h_{l,k,i}^{(n,m)})^T \mathbf w_{l,i,i}^{(m,m)} x_{i}^{(m)} z_{i,l}^{(m)} \right) z_{k,l}^{(n)} \nonumber \\
&&\!\!\!\!\!\!\!\!\!\!\!\!\!\!\!\!\!\!+ n_l^{(n)},
\end{eqnarray}
where $\mathbf w_{l,k,k}^{(n,n)}$ denotes the phase-shifting vector at BS $k$,
$x_{i}^{(m)}$ is the transmitted signal for user $i$ in FDC $m$,
and $n_l^{(n)}$ is the white Gaussian noise with distribution $\mathcal{CN}(0,\sigma ^2)$.
Denoting the transmit power of each BS as $P$, which is equally allocated to the front $N_a$ antenna elements, we have
\begin{eqnarray}
\mathbf{E} (|x_{i}^{(m)}|^2)=\frac{P}{N_a},~~i=1,2,...,K, ~m = 1,2,...,N.
\end{eqnarray}
To maximize the link margin of each `BS$\rightarrow$user' pair, we set
\begin{eqnarray}\label{eq:add2}
\mathbf w_{l,k,k}^{(n,n)}=\left[ \begin{array}{lll}
\frac{(\sum_{\iota = 1}^{L_{k,k}^{(n,n)}} \alpha_{\iota,l,k,k}^{(n,n)})^H}{|\sum_{\iota = 1}^{L_{k,k}^{(n,n)}} \alpha_{\iota,l,k,k}^{(n,n)}|}\\
\frac{(\sum_{\iota = 1}^{L_{k,k}^{(n,n)}} \alpha_{\iota,l,k,k}^{(n,n)} e^{j(2\pi/\lambda)\tau \sin(\phi_{\iota,k,k}^{(n,n)})})^H}{|\sum_{\iota = 1}^{L_{k,k}^{(n,n)}} \alpha_{\iota,l,k,k}^{(n,n)} e^{j(2\pi/\lambda)\tau \sin(\phi_{\iota,k,k}^{(n,n)})}|} \\
 \vdots  \\
\frac{(\sum_{\iota = 1}^{L_{k,k}^{(n,n)}} \alpha_{\iota,l,k,k}^{(n,n)} e^{j(N_a-1)(2\pi/\lambda)\tau \sin(\phi_{\iota,k,k}^{(n,n)})})^H}{|\sum_{\iota = 1}^{L_{k,k}^{(n,n)}} \alpha_{\iota,l,k,k}^{(n,n)} e^{j(N_a-1)(2\pi/\lambda)\tau \sin(\phi_{\iota,k,k}^{(n,n)})}|}
\end{array} \right]
\end{eqnarray}
each entry of which corresponds to the phase shifting value for analog beamforming, such that the signals from different antenna elements are aligned and accordingly the energy is focused on the corresponding user.

For the transmission link, we define
\begin{eqnarray}
&&g_{l,k,k}^{(n,n)} =  |(\mathbf h_{l,k,k}^{(n,n)})^T \mathbf w_{l,k,k}^{(n,n)}|^2,\nonumber \\
&&l, k=1,2,...,K, n = 1,2,...,N.
\end{eqnarray}
For the interference link, we define (\ref{eq:heng}) on the top of the next page.
\begin{figure*}[!t]
\begin{eqnarray}\label{eq:heng}
&&\!\!\!\!\!\!\!\!\! g_{l,k,i}^{(n,m)} = \mathbf{E}|(\mathbf h_{l,k,i}^{(n,m)})^T \mathbf w_{l,i,i}^{(m,m)}|^2  \nonumber \\
&&\!\!\!\!\!\!\!\!\! \overset{(\vartriangle)}{=}\frac{1}{\eta_{k,i}^{(n,m)}} \mathbf{E}|\sum_{n_a = 1}^{N_a} \left[\frac{(\sum_{\ell = 1}^{L_{i,i}^{(m,m)}} \alpha_{\ell,l,i,i}^{(m,m)} e^{j(n_a-1)(2\pi/\lambda)\tau \sin(\phi_{\ell,i,i}^{(m,m)})})^H}{|\sum_{\ell = 1}^{L_{i,i}^{(m,m)}} \alpha_{\ell,l,i,i}^{(m,m)} e^{j(n_a-1)(2\pi/\lambda)\tau \sin(\phi_{\ell,i,i}^{(m,m)})}|} \sum_{\iota = 1}^{L_{k,i}^{(n,m)}} \alpha_{\iota,l,k,i}^{(n,m)} e^{j(2(n_a-1)\pi/\lambda)\tau \sin(\phi_{\iota,k,i}^{(n,m)})}\right]|^2 \nonumber \\
&&\!\!\!\!\!\!\!\!\! \overset{(\triangledown)}{=} \frac{1}{\eta_{k,i}^{(n,m)}} \sum_{\iota = 1}^{L_{k,i}^{(n,m)}} |\sum_{n_a = 1}^{N_a} \frac{(\sum_{\ell = 1}^{L_{i,i}^{(m,m)}} \alpha_{\ell,l,i,i}^{(m,m)} e^{j(n_a-1)(2\pi/\lambda)\tau \sin(\phi_{\ell,i,i}^{(m,m)})})^H}{|\sum_{\ell = 1}^{L_{i,i}^{(m,m)}} \alpha_{\ell,l,i,i}^{(m,m)} e^{j(n_a-1)(2\pi/\lambda)\tau \sin(\phi_{\ell,i,i}^{(m,m)})}|}e^{j(2(n_a-1)\pi/\lambda)\tau \sin(\phi_{\iota,k,i}^{(n,m)})}|^2, \nonumber \\
&&\!\!\!\!\!\!\!\!\!l, k, i=1,2,...,K, n, m = 1,2,...,N.
\end{eqnarray}
\end{figure*}
In (\ref{eq:heng}), the expectation is taken over the unknown small-scale CSI, i.e., $\alpha_{\iota,l,k,i}^{(n,m)},~n\neq m$, to obtain the statistical channel gain of interference links. Particularly, the derivation ($\vartriangle$) can be obtained from (\ref{eq:add1}) and (\ref{eq:add2}), and the derivation ($\triangledown$)
can be obtained by leveraging the independent Gaussian distribution of $\alpha_{\iota,l,k,i}^{(n,m)},~n\neq m$.
Then based on (\ref{eq:e2_2}), the received signal to interference and noise ratio (SINR) in FRB $l$ in FDC $n$ can be derived as
\begin{eqnarray}
&&\!\!\!\!\!\!\!\!\!\bar{S}_l^{(n)} = \nonumber \\
&&\!\!\!\!\!\!\!\!\!\frac{\sum_{k=1}^{K} g_{l,k,k}^{(n,n)} z_{k,l}^{(n)}}{ \sum_{k=1}^{K} \left( \sum_{m\neq n}   \sum_{i=1}^{K} g_{l,k,i}^{(n,m)} z_{i,l}^{(m)} \right) z_{k,l}^{(n)} + N_a\sigma^2/P },
\end{eqnarray}
where the total interference is regarded to be Gaussian~\cite{r13}\cite{r14}, according to the central limit theorem.

In order to control the interference, and guarantee the transmission rate of all the users, we formulate the following optimization problem
\begin{subequations}
\begin{align}
& \underset{z_{k,l}^{(n)}} \max \,\, \underset{l,n} \min \,\,\,\, \log_2(1+\bar{S}_l^{(n)}) \\
& s.t. \,\, \sum_{l=1}^{K} z_{k,l}^{(n)}=1, \sum_{k=1}^{K} z_{k,l}^{(n)}=1, z_{k,l}^{(n)}\in \{0,1\}, \\
& ~~~~~~\!k,l=1,..., K, \,\,n=1,...,N,
\end{align}
\end{subequations}
where the minimum of the achievable rate of all the users is maximized, by optimizing the FRB allocation strategy.
This is particularly significant when all the users require similar quality of service (QoS).
According to
the monotonically-increasing property of the $\log$ function, we can equivalently simplify the problem as
\begin{subequations}\label{eq:e3_3}
\begin{align}
& \underset{z_{k,l}^{(n)}} \max \,\, \underset{l,n} \min \,\,\,\, \bar{S}_l^{(n)} \\
& s.t. \,\, \sum_{l=1}^{K} z_{k,l}^{(n)}=1, \sum_{k=1}^{K} z_{k,l}^{(n)}=1, z_{k,l}^{(n)}\in \{0,1\}, \\
&~~~~~~\!k,l=1,..., K, \,\,n=1,...,N.
\end{align}
\end{subequations}

\subsection{Problem Analysis}
Because of the non-linearity of the objective function,
the coordinated FRB allocation problem in~(\ref{eq:e3_3}) is
a complex nonlinear integer programming problem that
is rather difficult to solve.
Theorem~\ref{Theorem_NP} shows that~(\ref{eq:e3_3}) is NP-hard for any fixed $N \geq 3$ with respect to
the number of users $K$ in each FDC.

\begin{theorem}\label{Theorem_NP}
The coordinated FRB allocation problem in~(\ref{eq:e3_3}) can be solved in polynomial time when $N = 2$, and it is NP-hard with respect to $K$
for any fixed $N \geq 3$.
\end{theorem}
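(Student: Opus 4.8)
The plan is to split the statement into its two halves. For the polynomial-time claim when $N=2$, I would observe that with only two FDCs the interference coupling is ``one-directional symmetric'': FDC $1$'s SINRs depend only on how FDC $2$ allocates its FRBs, and vice versa, but there is no third party. First I would fix the allocation in FDC $2$ to be the identity permutation without loss of generality (the labeling of FRBs inside an FDC is arbitrary). Then, for each candidate threshold $t$ (the value we want $\min_{l,n}\bar S_l^{(n)}$ to exceed), the feasibility question becomes: is there a permutation $z^{(1)}$ such that every FRB in FDC $1$ meets $t$ and, simultaneously, every FRB in FDC $2$ still meets $t$ given $z^{(1)}$? Both conditions are, for a fixed $t$, of the form ``a certain bipartite graph on users $\times$ FRBs admits a perfect matching'': a pair $(k,l)$ in FDC $n$ is an allowable edge iff assigning user $k$ to FRB $l$ keeps $\bar S_l^{(n)}\ge t$, and this condition on the FDC-$1$ edge also determines whether the corresponding FRB in FDC $2$ is fine, because the interference seen by FDC $2$'s FRB $l$ is a sum over exactly the users FDC $1$ placed on $l$ — which, under a permutation, is a single user. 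So I would encode the joint constraint as a single bipartite-matching instance and invoke the bisection method over $t$ (finitely many relevant thresholds, since the $g$'s are fixed) together with a polynomial matching algorithm. The main care here is to check that the FDC-$2$ constraint really does decompose edge-by-edge in the FDC-$1$ matching — i.e. that ``FRB $l$ of FDC $2$ meets $t$'' depends only on which single user FDC $1$ put on FRB $l$, not on the whole permutation — which holds precisely because each FRB carries one user per FDC.

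For the NP-hardness when $N\ge 3$, the plan is a reduction from a known NP-complete problem; the natural candidate is \emph{3-dimensional matching} (or equivalently a suitably restricted ``Latin square completion''/\emph{list edge coloring of $K_{n,n}$}), since an allocation here is exactly a tuple of $N$ permutation matrices of size $K$, and the objective couples them pairwise. Concretely, I would try to show that deciding whether the optimum of (\ref{eq:e3_3}) is at least a given $t$ is NP-complete. The construction: set $N=3$, let $K$ scale with the instance size, and choose the channel gains $g_{l,k,i}^{(n,m)}$ so that the threshold-$t$ feasibility of FRB $l$ in FDC $n$ forbids exactly the ``illegal'' combinations of users assigned to FRB $l$ across the three FDCs — interference $g_{l,k,i}^{(n,m)}$ being large (pushing SINR below $t$) encodes ``user $k$ in FDC $n$ and user $i$ in FDC $m$ must not share FRB $l$''. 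Because there are three FDCs, an FRB $l$ ties together a triple (one user from each FDC), and requiring all triples to be legal is the 3-dimensional matching condition; the permutation constraints (\ref{eq:e2_1}) supply exactly the ``each element used once'' requirement. I would then verify that the required gains are realizable by the geometric channel model of (\ref{eq:heng}) — or, more cheaply, note that Theorem~\ref{Theorem_NP} is a statement about the combinatorial optimization problem (\ref{eq:e3_3}) for \emph{arbitrary} nonnegative gains, so it suffices to exhibit one gain table forcing the gadget, without worrying about physical channels.

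The hard part, I expect, is the NP-hardness gadget: making the pairwise-sum structure of $\bar S_l^{(n)}$ (its denominator sums $g_{l,k,i}^{(n,m)} z_{i,l}^{(m)}$ over $i$ and over $m\ne n$) faithfully implement a triple constraint without unwanted side effects — in particular, ensuring that a single ``bad'' partner already kills the threshold (so partial interference from two other FDCs doesn't accidentally sum to a violation when each term alone is fine, or vice versa), and that the numerator gains $g_{l,k,k}^{(n,n)}$ can be taken uniform so the only binding constraints are the engineered ones. One clean route is to make all ``legal'' interference terms exactly $0$ and all ``illegal'' ones large enough that even one of them drops the SINR below $t$; then the denominator is either $N_a\sigma^2/P$ (all legal) or strictly larger, and threshold feasibility is literally ``no illegal pair is co-scheduled on any FRB,'' which is 3-dimensional matching. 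Packaging this reduction, checking it is polynomial-time, and confirming membership in NP (a permutation tuple is a polynomial-size certificate, and evaluating $\min_{l,n}\bar S_l^{(n)}$ is fast) completes the argument.
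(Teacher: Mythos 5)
Your $N=2$ argument is essentially the paper's: fix one FDC's schedule, bisect over the threshold $t$, and reduce each feasibility test to a bipartite perfect-matching question. This is exactly the ``series of LBAPs'' route of Appendix A and Lemma~\ref{Bi_LBAP} (a bottleneck assignment threshold test \emph{is} a bipartite matching), and your remark that only finitely many thresholds are relevant is a harmless refinement. The substantive problem is in the hardness half. Your gadget sets every ``legal'' interference gain to $0$ and every ``illegal'' one large, so that threshold feasibility on FRB $l$ becomes ``no illegal pair among the co-scheduled users.'' But then the family of admissible triples your instances can realize is always the triangle set of a tripartite graph: a triple is forbidden if and only if it contains a forbidden pair. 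A general 3-dimensional matching instance has an arbitrary triple set $M \subseteq X\times Y\times Z$, which is not of this pairwise-decomposable form, so you cannot ``choose the gains so that feasibility forbids exactly the illegal combinations'' for an arbitrary 3DM instance. Trying to exploit the fact that the denominator of $\bar S_l^{(n)}$ sums two interference terms does not help: the resulting constraint has the staircase form $a_{k,y}+b_{k,z}\le c_k$, still far from an arbitrary predicate on triples.

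What your construction actually reduces from is ``partition a tripartite graph into triangles,'' equivalently the axial bottleneck 3-dimensional assignment problem with decomposable costs $A_{ijk}=\max\{d^{12}_{ij},d^{13}_{ik},d^{23}_{jk}\}$, i.e., problem~(\ref{eq:eB_7}). That problem is indeed NP-hard, but this is a separate, nontrivial fact that does not follow from Karp's 3DM result and must be proved or cited; it is precisely the result the paper imports from the 3BAP literature before chaining the reductions (\ref{eq:eB_7})$\to$(\ref{eq:eB_6})$\to$(\ref{eq:e3_3}). So the missing idea is not the gadget --- yours coincides with the paper's up to the $0$/large choice of distances --- but the identification of the correct decomposable source problem and a justification of its hardness. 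Two smaller points: you never lift the $N=3$ construction to arbitrary fixed $N>3$ (the paper pads with $N-3$ FDCs whose cross-gains are all zero), and your ``without loss of generality'' fixing of one FDC's permutation is justified by relabeling FRBs, which is delicate here because the gains $g^{(n,m)}_{l,k,i}$ carry the FRB index $l$ and the FRBs are shared across FDCs; this step (also taken by the paper) implicitly treats the FRBs as interchangeable.
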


\begin{proof}
See Appendix A.
\end{proof}

The basic idea of the proof can be summarized as follows.
When $N = 2$, the coordinated FRB allocation problem in~(\ref{eq:e3_3}) can be transformed into a series of linear bottleneck assignment
problems (LBAPs) through bisection searching.
Because an LBAP is polynomial-time solvable~\cite{r22}\cite{r23}, we know that (\ref{eq:e3_3}) can be solved in polynomial time when $N=2$.
For any fixed $N \geq 3$, we can present a polynomial reduction from a three-dimensional axial bottleneck assignment
problem (3BAP) to~(\ref{eq:e3_3}).
Based on the fact that a 3BAP problem is NP-hard~\cite{r24}-\cite{r26}, we can conclude that~(\ref{eq:e3_3}) is NP-hard when $N \geq 3$.

\section{A Greedy Coordinated FRB Allocation Scheme}
Theorem~\ref{Theorem_NP} indicates that
for a dense mmWave network with more than three FDCs and a large number of users to serve in each FDC,
it will be computation-consuming to find an optimal solution for
the coordinated FRB allocation problem in~(\ref{eq:e3_3}).
In this section,
a low-complexity algorithm is designed to solve~(\ref{eq:e3_3}) in a greedy way.
Particularly, a degraded problem derived from~(\ref{eq:e3_3}) is investigated in Section IV-A to offer insights into the original problem,
based on which a greedy coordinated FRB allocation scheme is presented in Section IV-B.

\subsection{A Degraded Case}
Since the problem in (\ref{eq:e3_3}) is difficult to solve directly, we first investigate a degraded case to understand the original problem.
We consider a system consisting of $U$ FDCs, and assume that the
FRB allocation strategies for the first $U-1$ FDCs, i.e., $\{z_{k,l}^{(n)} \, | \, k,l=1,...,K, n=1,2,...,U-1\}$, have been
given and only the FRB allocation strategy for FDC $U$ needs to be optimized.
Note that all of the system parameters in the degraded case are the same as the original system, and the only difference
lies in the reduced optimization variables.
In this case, $\bar{S}_l^{(n)}$ can be simplified as
\begin{eqnarray}\label{eq:e4_1}
&&\!\!\!\!\!\!\!\!\!\bar{S}_l^{(n)} = \nonumber \\
&&\!\!\!\!\!\!\!\!\!\frac{ g_{l,u_l^n,u_l^n}^{(n,n)}}{ \sum_{m=1,m\neq n}^{U-1} g_{l,u_l^n,u_l^m}^{(n,m)} +  \sum_{k=1}^{K} g_{l,u_l^n,k}^{(n,U)} z_{k,l}^{(U)} + N_a\sigma^2/P },\nonumber \\
&&\!\!\!\!\!\!\!\!\!l=1,...,K, n=1,..., U-1,
\end{eqnarray}
and
\begin{eqnarray}\label{eq:e4_2}
&&\!\!\!\!\!\!\!\!\!\!\!\!\!\!\!\!\!\!\bar{S}_l^{(U)} = \frac{\sum_{k=1}^{K} g_{l,k,k}^{(U,U)} z_{k,l}^{(U)} }{ \sum_{k=1}^{K} \left( \sum_{m=1}^{U-1} g_{l,k,u_l^m}^{(U,m)} \right) z_{k,l}^{(U)} + N_a\sigma^2/P },\nonumber \\
&&\!\!\!\!\!\!\!\!\!\!\!\!\!\!\!\!\!\!l=1,...,K,
\end{eqnarray}
where $u_l^n$ denotes the user scheduled in FRB $l$ in FDC $n,n=1,2,...,U-1$.
Then, the coordinated FRB allocation problem can be written as
\begin{subequations}\label{eq:e4_3}
\begin{align}
& \underset{z_{k,l}^{(U)}} \max \,\, \underset{l,n} \min \,\,\,\, \bar{S}_l^{(n)} \\
& s.t. \,\, \sum_{l=1}^{K} z_{k,l}^{(U)}=1, \sum_{k=1}^{K} z_{k,l}^{(U)}=1, z_{k,l}^{(U)}\in \{0,1\}, \\
&~~~~~~\!k,l=1,..., K,~n=1,..., U,
\end{align}
\end{subequations}
which is equivalent to
\begin{subequations}\label{eq:e4_4}
\begin{align}
& \max \,\,t \\
& s.t. \,\, \sum_{l=1}^{K} z_{k,l}^{(U)}=1, \sum_{k=1}^{K} z_{k,l}^{(U)}=1, z_{k,l}^{(U)}\in \{0,1\},   \\
& \,\,\,\,\,\,\,\,\,\,\,\, \bar{S}_l^{(n)} \geq t,\,\, k,l=1,..., K, \,\,n=1,..., U,
\end{align}
\end{subequations}
by introducing a slack variable $t$.

Note that (\ref{eq:e4_4}) is also a nonlinear integer programming problem
due to the non-linearity of the constraints in~(\ref{eq:e4_4}c).
One way to obtain the optimal solution is exhaustive search, but
the complexity is factorial of the number of users in each FDC, i.e., $K!$.
In the following, we propose an algorithm that can solve~(\ref{eq:e4_4}) in polynomial time.
Particularly,
for any fixed $t$, the non-linear constraints in~(\ref{eq:e4_4}c) can be transformed
into a set of linear constraints
\begin{eqnarray}\label{eq:e4_5}
\sum_{k=1}^{K} a_{k,l}^{n} z_{k,l}^{(U)} \leq b_{l}^{n} ,\,\, k,l=1,..., K, \,\,n=1,..., U,
\end{eqnarray}
with
\begin{eqnarray}
&&\!\!\!\!\!\!\!\!\!\!\!\!\!\!\!\!\!\!\!\! a_{k,l}^{n}=t  g_{l,u_l^n,k}^{(n,U)}, \\
&&\!\!\!\!\!\!\!\!\!\!\!\!\!\!\!\!\!\!\!\! b_{l}^{n}=g_{l,u_l^n,u_l^n}^{(n,n)} - t  \left( \sum_{m=1,m\neq n}^{U-1} g_{l,u_l^n,u_l^m}^{(n,m)}  +  N_a\sigma^2/P \right),\\
&&\!\!\!\!\!\!\!\!\!\!\!\!\!\!\!\!\!\!\!\! n=1,2,...,U-1, \nonumber
\end{eqnarray}
and
\begin{eqnarray}
&& a_{k,l}^{U} = t \sum_{m=1}^{U-1} g_{l,k,u_l^m}^{(U,m)} -  g_{l,k,k}^{(U,U)}, \\
&& b_{l}^{U} = - t N_a \sigma^2/P.
\end{eqnarray}
Then the problem in (\ref{eq:e4_4}) can be decomposed into
a series of feasibility linear integer programming problems by using the bisection method~\cite{r27}.
Table~\ref{tb:Algorithm} presents the detailed algorithm, where
the feasibility linear integer programming problem that needs to be solved in each iteration
can be written as
\begin{subequations}\label{eq:e4_8}
\begin{align}
& \text{find} \,\, \{ z_{k,l}^{(U)} \}_{k,l=1,...,K} \\
& s.t. \,\, \sum_{l=1}^{K} z_{k,l}^{(U)}=1, \sum_{k=1}^{K} z_{k,l}^{(U)}=1, z_{k,l}^{(U)}\in \{0,1\},  \\
& \,\,\,\,\,\,\,\,\,\, \sum_{k=1}^{K} \hat{a}_{k,l}^{n} z_{k,l}^{(U)} \leq \hat{b}_{l}^{n} ,\,\, k,l=1,..., K, \,\,n=1,..., U,
\end{align}
\end{subequations}
where
\begin{eqnarray}
&&\!\!\!\!\!\!\!\!\!\!\!\!\!\!\!\!\!\!\!\! \hat{a}_{k,l}^{n}=t_m g_{l,u_l^n,k}^{(n,U)},\\
&&\!\!\!\!\!\!\!\!\!\!\!\!\!\!\!\!\!\!\!\! \hat{b}_{l}^{n}=g_{l,u_l^n,u_l^n}^{(n,n)} - t_m  \left( \sum_{m=1,m\neq n}^{U-1} g_{l,u_l^n,u_l^m}^{(n,m)}  +  N_a\sigma^2/P \right), \\
&&\!\!\!\!\!\!\!\!\!\!\!\!\!\!\!\!\!\!\!\! n=1,2,...,U-1, \nonumber
\end{eqnarray}
and
\begin{eqnarray}
&& \hat{a}_{k,l}^{U} = t_m \sum_{m=1}^{U-1} g_{l,k,u_l^m}^{(U,m)} -  g_{l,k,k}^{(U,U)}, \\
&& \hat{b}_{l}^{U} = - t_m N_a \sigma^2/P.
\end{eqnarray}
Using the maximum and minimum interference from other FDCs for each user,
$v_{min}$ and $v_{max}$, i.e.,
the lower bound and the upper bound of $t$,
can be set as
\begin{subequations}\label{eq:e4_9}
\begin{align}
v_{min} = \underset{n,k}\min \left\{ \frac{ g_{l,k,k}^{(n,n)}}{ \sum_{m\neq n} \text{max}_{i} \left( g_{l,k,i}^{(n,m)} \right) + N_a\sigma^2/P } \right\}, \\
v_{max} = \underset{n,k}\max  \left\{ \frac{ g_{l,k,k}^{(n,n)}}{ \sum_{m\neq n} \text{min}_{i} \left( g_{l,k,i}^{(n,m)} \right) + N_a\sigma^2/P } \right\},
\end{align}
\end{subequations}
respectively.

\begin{table}
  \centering
  \caption{The algorithm for solving (\ref{eq:e4_4}).}
  \label{tb:Algorithm}
  \begin{tabular}{l}
    \hline \hline
$\!\!\!\!$\emph{Initialization:}~ Set $ U^0 = v_{max}, L^0 = v_{min}$, and $\epsilon = 1.0\times 10^{-3}$.\\
    \hline
$\!\!\!\!$\emph{Iterations:} $s=1,2,\ldots$\\
     \quad 1) Let $t_m = (U^{s-1}+L^{s-1})/2$.\\
     \quad 2) Solve the feasibility linear integer programming problem in (\ref{eq:e4_8}).  \\
     \quad 3) If the problem is feasible, $U^{s} = U^{s-1}, L^{s} = t_m$; otherwise \\
     \quad \quad  $ U^{s} = t_m, L^{s} = L^{s-1}$. \\
     \quad 4) If $|U^{s}-L^{s}|/L^{s}<\epsilon$, stop; otherwise go to 1). \\
    \hline
   \end{tabular}
\end{table}

The following
Lemma~\ref{Bi_LBAP} shows that~(\ref{eq:e4_8}) can be solved in $O(K^{2.5} / \sqrt{\text{log} (K)})$ time,
which is polynomial with respect to the number of users $K$ in each FDC.
As the problem
in~(\ref{eq:e4_4}) can be solved through solving a series of feasibility problem as that in~(\ref{eq:e4_8}) based on bisection searching,
we can conclude that~(\ref{eq:e4_4}) can be solved in polynomial time~\cite{r28}.

\begin{lemma}\label{Bi_LBAP}
The feasibility problem in~(\ref{eq:e4_8}) can be transformed equivalently into an LBAP
as shown in~(\ref{eq:e4_10})
and it can be solved in $O(K^{2.5} / \sqrt{\text{log} (K)})$ time.
\begin{subequations}\label{eq:e4_10}
\begin{align}
& \underset{z_{k,l}^{(U)}} \min \,\, \underset{k,l} \max \,\,\,\,   \hat{c}_{k,l} z_{k,l}^{(U)} \\
& s.t. \,\, \sum_{l=1}^{K} z_{k,l}^{(U)}=1, \sum_{k=1}^{K} z_{k,l}^{(U)}=1, z_{k,l}^{(U)}\in \{0,1\}, \\
&~~~~~~\! k,l=1,..., K, \,\,n=1,..., U,
\end{align}
\end{subequations}
where
\begin{eqnarray}
\hat{c}_{k,l} =  \max_{n} \left\{(M + \hat{a}_{k,l}^n) / (M + \hat{b}_{l}^n) \right\},
\end{eqnarray}
and $M$ is a positive constant satisfying
\begin{eqnarray}
M + \min \left\{ \min_{k,l,n} \{ \hat{a}_{k,l}^n \}, \, \min_{l,n} \{ \hat{b}_{l}^n \} \right\} > 0.
\end{eqnarray}
\end{lemma}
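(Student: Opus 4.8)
The plan is to exploit the fact that the equality constraints in (\ref{eq:e4_8}b) restrict $[z_{k,l}^{(U)}]_{k,l}$ to be a permutation matrix, so that each of the $KU$ inequality constraints in (\ref{eq:e4_8}c) collapses to a pointwise condition on the unique assigned pair in column $l$; after the shift by $M$ this becomes a bottleneck-type threshold condition, and collecting the $U$ conditions for a given $l$ under a $\max_n$ produces exactly the cost array $\hat{c}_{k,l}$.

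First I would argue the equivalence with (\ref{eq:e4_10}). By (\ref{eq:e4_8}b), for each column $l$ there is a unique row $k=\pi(l)$ with $z_{k,l}^{(U)}=1$, and $\pi$ is a bijection of $\{1,\dots,K\}$; conversely every bijection yields an admissible $[z_{k,l}^{(U)}]$. For such a matrix $\sum_{k}\hat{a}_{k,l}^{n}z_{k,l}^{(U)}=\hat{a}_{\pi(l),l}^{n}$, so (\ref{eq:e4_8}c) is equivalent to $\hat{a}_{\pi(l),l}^{n}\le\hat{b}_{l}^{n}$ for all $l,n$. Adding $M$ to both sides, and noting that the hypothesis $M+\min\{\min_{k,l,n}\hat{a}_{k,l}^{n},\,\min_{l,n}\hat{b}_{l}^{n}\}>0$ makes every $M+\hat{a}_{k,l}^{n}$ and every $M+\hat{b}_{l}^{n}$ strictly positive, this is in turn equivalent to $(M+\hat{a}_{\pi(l),l}^{n})/(M+\hat{b}_{l}^{n})\le 1$ for all $l,n$, i.e. to $\hat{c}_{\pi(l),l}\le 1$ for all $l$, where $\hat{c}_{k,l}=\max_{n}\{(M+\hat{a}_{k,l}^{n})/(M+\hat{b}_{l}^{n})\}$.

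Next I would read off the LBAP. Since every $\hat{c}_{k,l}>0$ and exactly $K$ cells are assigned, $\max_{k,l}\hat{c}_{k,l}z_{k,l}^{(U)}=\max_{l}\hat{c}_{\pi(l),l}$ for the permutation $\pi$ encoded by $z$; hence the optimal value of (\ref{eq:e4_10}) is $c^{\star}=\min_{\pi}\max_{l}\hat{c}_{\pi(l),l}$, and the previous paragraph shows that (\ref{eq:e4_8}) is feasible if and only if $c^{\star}\le 1$, an optimal assignment of (\ref{eq:e4_10}) with value at most $1$ being a feasible point of (\ref{eq:e4_8}). This establishes the claimed equivalence: solve the LBAP (\ref{eq:e4_10}), and declare (\ref{eq:e4_8}) feasible precisely when $c^{\star}\le 1$, returning the optimal assignment in that case.

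For the running time I would invoke the standard complexity of the LBAP: thresholding on the at most $K^{2}$ distinct values $\hat{c}_{k,l}$ turns (\ref{eq:e4_10}) into a sequence of perfect-matching tests on a bipartite graph with $O(K^{2})$ edges, and with the fastest matching subroutine this can be organized to run in $O(K^{2.5}/\sqrt{\log K})$ time overall~\cite{r22}\cite{r23}. I expect the only delicate point to be the bookkeeping around the shift constant $M$ — specifically, certifying that every denominator $M+\hat{b}_{l}^{n}$ is strictly positive (some $\hat{b}_{l}^{n}$, e.g. $\hat{b}_{l}^{U}=-t_mN_a\sigma^2/P$, are negative), which is exactly what the stated lower bound on $M$ guarantees, and checking that the passage to ratios is a genuine equivalence rather than a one-sided relaxation; once the permutation-matrix structure of the feasible set is used, the rest is routine.
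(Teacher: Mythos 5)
Your proposal is correct and follows essentially the same route as the paper's Appendix B: shift both sides of the constraints in (\ref{eq:e4_8}c) by $M$ so that all quantities become positive, divide to obtain the ratio costs, collapse the $U$ constraints per cell into $\hat{c}_{k,l}=\max_n c_{k,l}^n$ using the one-nonzero-entry-per-column structure, conclude that (\ref{eq:e4_8}) is feasible iff the LBAP optimum is at most $1$, and invoke the standard threshold-plus-matching LBAP algorithm for the $O(K^{2.5}/\sqrt{\log K})$ bound. The only cosmetic difference is that you argue directly on the permutation $\pi$ encoded by $z_{k,l}^{(U)}$, whereas the paper passes through an intermediate min--max-of-sums formulation before reading off the LBAP; the substance is identical.
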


\begin{proof}
See Appendix B.
\end{proof}

\begin{figure}[t]
\begin{center}
  \includegraphics[width=6.7cm]{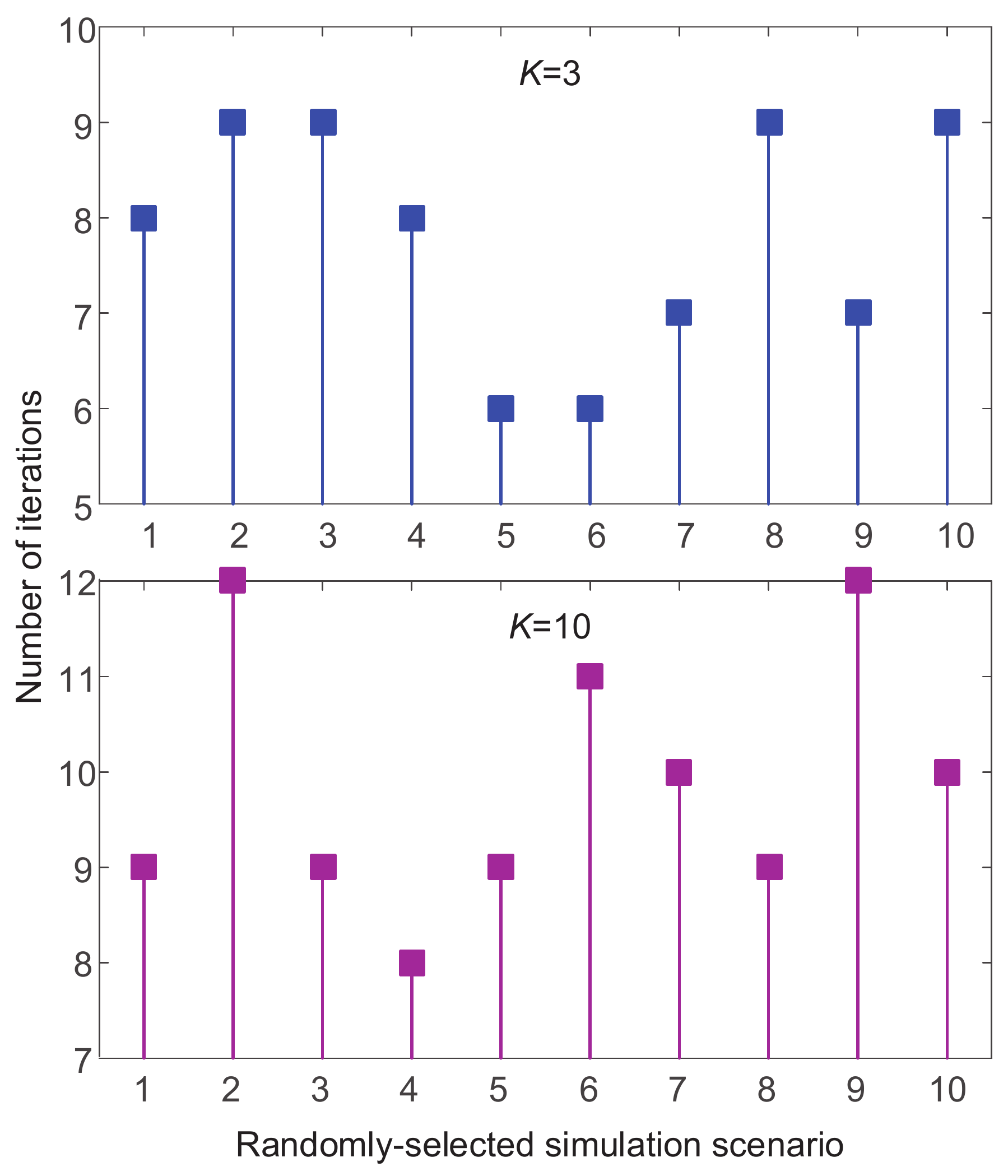}
  \caption{Number of iterations needed by the algorithm in Table~\ref{tb:Algorithm}.}
  \label{fig_iter_1}
\end{center}
\end{figure}

With the same simulation setting as that used in Section V,
the number of iterations needed by the bisection searching algorithm in Table~\ref{tb:Algorithm}
is shown in Fig.~\ref{fig_iter_1}, for randomly-selected 10 out of 1000 simulation scenarios.
We can see that for both $K=3$ and $K=10$,
it takes around $10$ iterations to converge.

\subsection{A Greedy Solution}
Note that if we optimize the FRB allocation indicators FDC by FDC and fix the FRB allocation strategies of all the other FDCs
at each time when the FRB allocation of one FDC is being optimized,
then we can transform the original coordinated FRB allocation
problem in (\ref{eq:e3_3}) into a series of much simpler subproblems as shown in (\ref{eq:e4_4}).
Obviously, this approach may quite possibly fail to obtain the global optimal result. However, it uncovers a potential pathway to find a local optimum solution of the problem.
Based on this observation,
we propose a greedy coordinated FRB allocation scheme for solving~(\ref{eq:e3_3}).

\begin{table}
  \centering
  \caption{The proposed coordinated FRB allocation scheme.}
  \label{tb:Algorithm_all}
  \begin{tabular}{l}
    \hline \hline
$\!\!\!\!$\emph{Initialization:}~ \\
     \quad $\mathcal{A} = \{g(1)\}$,\\
     \quad $z_{k,k}^{(g(1))}=1$, $z_{k,l}^{(g(1))}=0$ for $l\neq k,\,\,k,l=1,..., K$.\\
    \hline
$\!\!\!\!$\emph{Step 1:} \emph{for} $s=2,3,...,N,$\\
     \quad \quad \quad $\,$ 1) Add FDC $g(s)$ to $\mathcal{A}$ and set $U = s$.\\
     \quad \quad \quad $\,$ 2) Solve (\ref{eq:e4_4}) for the subsystem $\mathcal{A}$ and obtain \\
     \quad \quad \quad \quad $\,\,$ the FRB allocation indicators $z_{k,l}^{(g(U))},\,\,k,l=1,..., K$.  \\
$\!\!\!\!$     \quad \quad \quad \emph{end} \\
$\!\!\!\!$\emph{Step 2:} Set the FRB allocation indicators of the $N$ FDCs \\
$\!\!\!\!$     \quad \quad \quad according to the results achieved in Step 1,  \\
$\!\!\!\!$     \quad \quad \quad and denote the minimum of all $\bar{S}_l^{(n)}$ by $S_0$.\\
$\!\!\!\!$     \quad \quad \quad Let $U = N$, $\epsilon = 1.0 \times 10^{-3}$. \\
$\!\!\!\!$\emph{Step 3: Iterations:} $r=1,2,...$\\
     \quad \quad 1) \emph{for} $s=1,2,...,N,$ \\
     \quad \quad \quad \quad $\,$ Take FDC $g(s)$ as the $U$th FDC and the other $N-1$ \\
     \quad \quad \quad \quad $\,$ FDCs as FDC $1 \sim U-1$. Solve the problem in (\ref{eq:e4_4}) \\
     \quad \quad \quad \quad $\,$ and update the FRB allocation indicators for FDC $g(s)$. \\
     \quad \quad \quad \emph{end} \\
     \quad \quad 2) Calculate $\bar{S}_l^{(n)}$, $n=1,...,N,l=1,...,K$, according to \\
     \quad \quad \quad the updated FRB allocation indicators in 1), \\
     \quad \quad \quad and denote the minimum of all $\bar{S}_l^{(n)}$ by $S_r$. \\
          \quad \quad 3) If $| S_r - S_{r-1} | / S_{r-1} < \epsilon$, stop; otherwise go to 1).\\
     \quad \quad \quad \\
    \hline
   \end{tabular}
\end{table}

As illustrated in Table~\ref{tb:Algorithm_all},
the proposed coordinated FRB allocation scheme consists of three steps.
In Step $1$, $N$ FDCs are added to a subsystem $\mathcal{A}$ one by one
according to the sequence determined by a function $f$
(the function $g$ in Table~\ref{tb:Algorithm_all} is the inverse function of $f$, i.e., $g = f^{-1}$).
Accordingly,
we optimize the FRB allocation of the $N$ FDCs one by one
by tackling a subproblem as that in (\ref{eq:e4_4}) with $U=s$ in each iteration $s$ ($s=1,...,N$).
Based on the FRB allocation results of the $N$ FDCs determined in Step $1$,
we iteratively optimize the FRB allocation of each FDC with the others fixed in Step $2$ and Step $3$
so as to further improve the performance of the system.
In each iteration in Step 3,
we need to solve $N$ subproblems as shown in (\ref{eq:e4_4}) with $U=N$.

In Table~\ref{tb:Algorithm_all},
the function $f:\{1,2,...,N\}\mapsto\{1,2,...,N\}$ is used to reorder the $N$ FDCs in the system
and determine the sequence of the FDCs in the optimization,
so as to provide better fairness among all the users.
In the proposed scheme, we reorder the FDCs according to the total interference power they suffer.
The total interference power of FDC $n$, $n=1,2,...,N$, can be calculated as
\begin{eqnarray}\label{eq:e4_11}
PoI_n = \sum_{k=1}^{K}\sum_{m=1,m\neq n}^{N}\sum_{i=1}^{K}\frac{g_{l,k,i}^{(n,m)}P}{N_a}.
\end{eqnarray}
The function $f$ renumbers the FDC with the largest interference power as $1$ and the FDC with the smallest
interference power as $N$.
It means that the FDCs suffering more severe interference are optimized
earlier so that
the FDCs whose performances are severely affected by co-channel interference
have higher priority in the optimization. Intuitively,
this is consistent with the fairness-oriented optimization target. In theory, there should be an optimal function $f$ for the proposed scheme, which will be
studied in our future work.

In Table~\ref{tb:Algorithm_all}, it is clear that $S_r$ is nondecreasing with $r$, and is upper bounded by the optimal value of the objective function
of~(\ref{eq:e3_3}), thus, the proposed scheme is assured
to converge.
From Table~\ref{tb:Algorithm} and Table~\ref{tb:Algorithm_all},
we can see that the proposed scheme actually transforms~(\ref{eq:e3_3}) into
a series of LBAPs as that shown in~(\ref{eq:e4_10}),
and thus it has a polynomial-time computational complexity.
Fig.~\ref{fig_iter_2} illustrates the convergence speed of the proposed scheme
under the simulation setting in Section V.
It can be seen that the proposed scheme converges rather quickly.
For both $K=3$ and $K=10$,
it needs less than $5$ iterations to converge.

\begin{figure}[t]
\begin{center}
  \includegraphics[width=6.7cm]{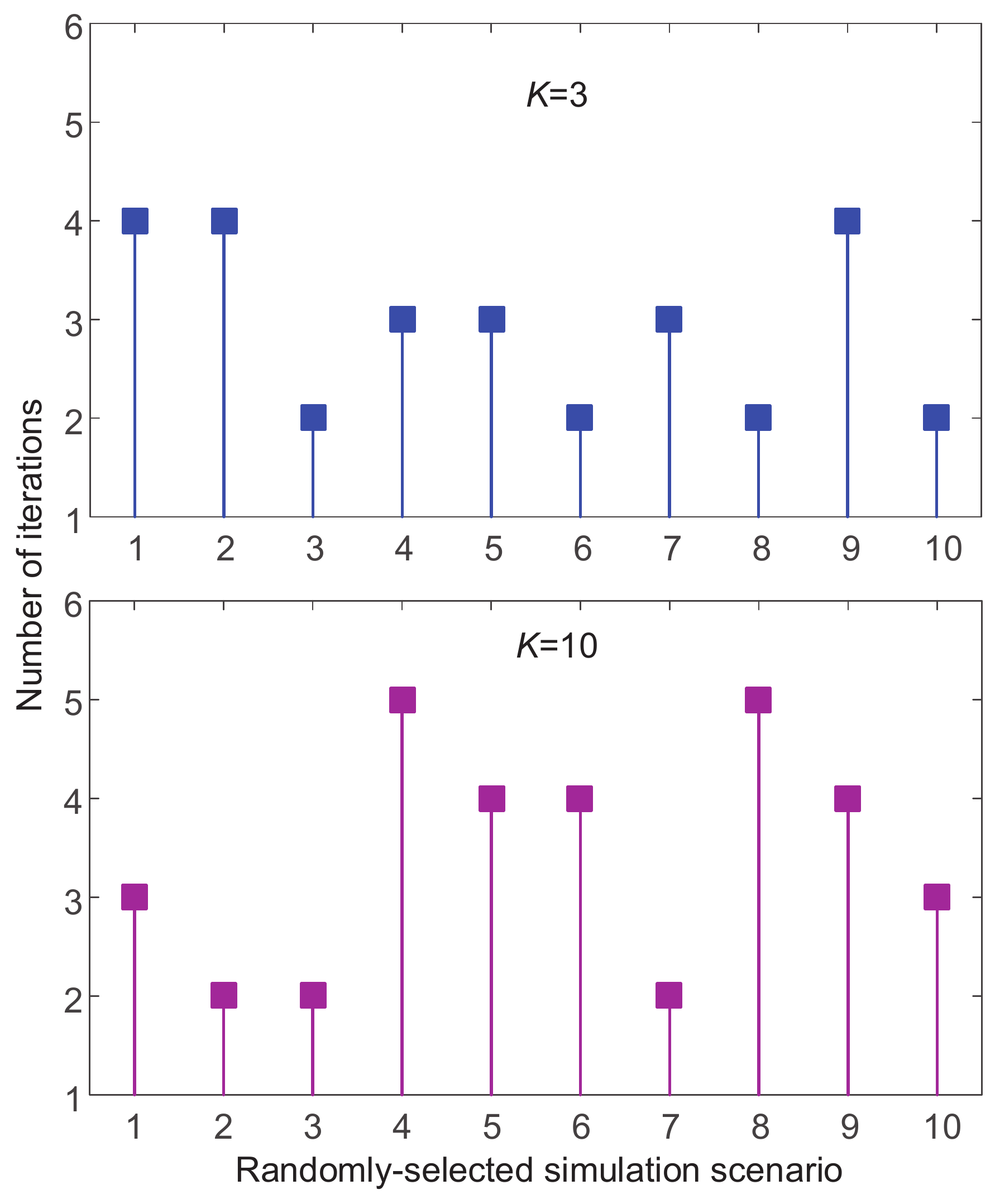}
  \caption{Number of iterations needed by the scheme in Table~\ref{tb:Algorithm_all}.}
  \label{fig_iter_2}
\end{center}
\end{figure}

In practical applications, the proposed scheme can be implemented in a periodic way, so as to adapt to the network dynamics. However, from the perspective of reducing cost, it can also be triggered only when the total co-channel interference is more severe than a given threshold. In this case, a real-time interference estimator is required to monitor the status of the network.

\section{Simulation Results}
In this part, we demonstrate the performance of the proposed scheme by comparing it with
the theoretically optimal result obtained by exhaustive search, the state-of-the-art method designed with full CSI~\cite{r15}, and the simplest single-FDC scheme with no interference coordination.
Note that the scheme in~\cite{r15} was designed to maximize the sum rate of a dense cellular network. In
the traditional cellular system, the performance of different users associated with different cells is statistically similar due to the cellular-like regular network topology.
Therefore, sum rate maximization is a reasonable optimization goal. However, when network densification meets mmWave communications, an amorphous system topology
is inevitable in practice as discussed in Section I. Thereby, minimum rate maximization becomes a more urgent optimization objective, as discussed in Section III.

In our simulations, the microwave BS is located at the center of a circular coverage area, and the mmWave BSs are deployed following the two-dimensional
uniform distribution.
We consider $N=10$, $K=3~\text{or}~10$, $N_a = 16$. The system is operated at the 28GHz carrier frequency.
At each mmWave BS, the distance between antenna elements is set as $\tau = \frac{\lambda}{2}$. As for the large-scale channel parameters,
we set
\begin{eqnarray}
\eta_{k,i}^{(n,m)}[dB] = 61.4 + 20 \log_{10}(d_{k,i}^{(n,m)})+\xi_{k,i}^{(n,m)},
\end{eqnarray}
for the LOS case with $\xi_{k,i}^{(n,m)} \sim \mathcal{CN}(0,5.8)$, and
\begin{eqnarray}
\eta_{k,i}^{(n,m)}[dB] = 72.0 + 29.2 \log_{10}(d_{k,i}^{(n,m)})+\xi_{k,i}^{(n,m)},
\end{eqnarray}
for the non-LOS case with $\xi_{k,i}^{(n,m)} \sim \mathcal{CN}(0,8.7)$~\cite{r011}. Without loss of generality, the number of scatterers are assumed
to be 3 for all the transmission and interference links, and all the AoDs are randomly generalized in $[0,2\pi]$ with an uniform distribution.
We consider $10$ randomly-generalized large-scale CSI conditions with different path loss and different AoDs, and 100 small-scale Rayleigh fading realizations
for each large-scale CSI condition. In the sequel, all the results are derived by averaging over these 1000 simulation scenarios.

\begin{figure}[t]
\begin{center}
  \includegraphics[width=8.1cm]{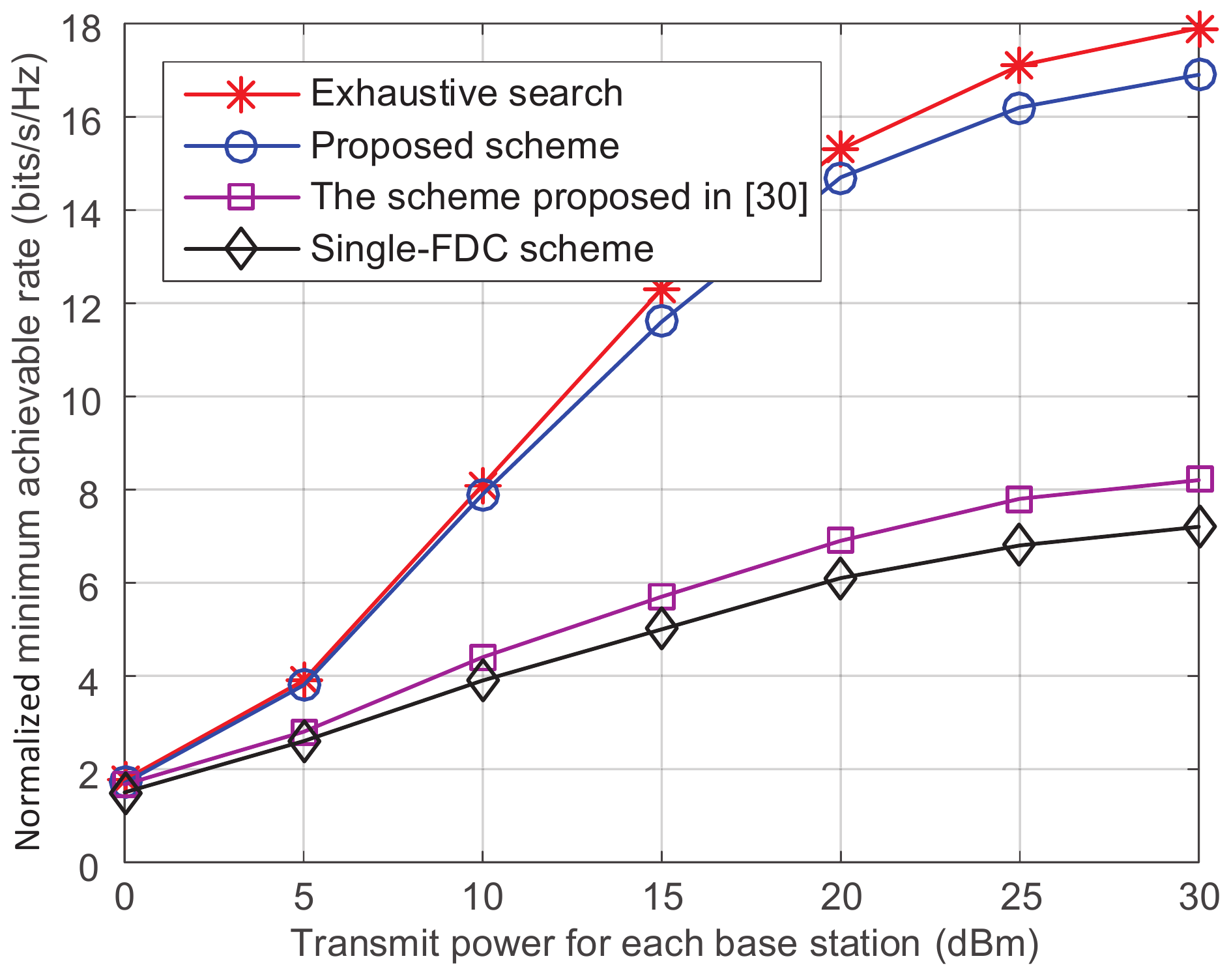}
  \caption{Minimum normalized achievable rate of all the users for $K=3$.}
  \label{fig_min_1}
\end{center}
\end{figure}

We start from the simpler scenario with $K=3$, for which the exhaustive search can be used
to find the optimal FRB allocation strategy based on~(\ref{eq:e3_3}). Fig.~\ref{fig_min_1}
compares the minimum achievable rate (normalized by the bandwidth) by different FRB allocation schemes.
It can be clearly observed that although the proposed scheme is only guaranteed to converge to a local optimum,
its performance is not far from that of the global optimum.
Recall that the proposed scheme optimizes the FRB allocation indicators repeatedly
through the iterative operation in Step $3$ of Table~\ref{tb:Algorithm_all}.
By optimizing the sequence of the FDCs
via the function $f$, a higher priority is given to those users who suffer larger interference in the optimization.
Accordingly, the minimum achievable rate can be significantly improved.
Moreover, a remarkable performance gap can be seen in the figure between the proposed scheme and
the scheme proposed in~\cite{r15}, as well as the single-FDC scheme.
Notably the gain goes larger as the transmit power of each mmWave BS increases,
which can be explained by the fact that the more severe the co-channel interference, the higher the performance gain that
can be achieved by interference coordination.

With the same density of mmWave BSs, Fig.~\ref{fig_min_2} further illustrates the performance of the proposed scheme for a larger number of users in each FDC, i.e., $K=10$.
In this case, exhaustive search for the optimal FRB allocation strategy is not considered as it is
too time-consuming. We can still observe a significant gain offered by the proposed scheme over the other methods.
On one hand, the performance of the scheme proposed in~\cite{r15} is almost the same as that of the single-FDC scheme, which
indicates that maximizing merely the sum rate can not guarantee to improve the minimum achievable rate of the users, under an amorphous system topology. On the other hand,
if only the large-scale CSI is available,
the performance of the scheme proposed in~\cite{r15}, which was originally designed on the basis of full CSI, could be severely degraded.
By comparing Fig.~\ref{fig_min_1} and Fig.~\ref{fig_min_2}, we find that the minimum achievable rate for $K=3$ is a little better
than that for $K=10$ under the same condition, which can be explained by the fact that the worst user channel becomes worse as the number of users increases, with the same density of mmWave BSs.

\begin{figure}[t]
\begin{center}
  \includegraphics[width=8.1cm]{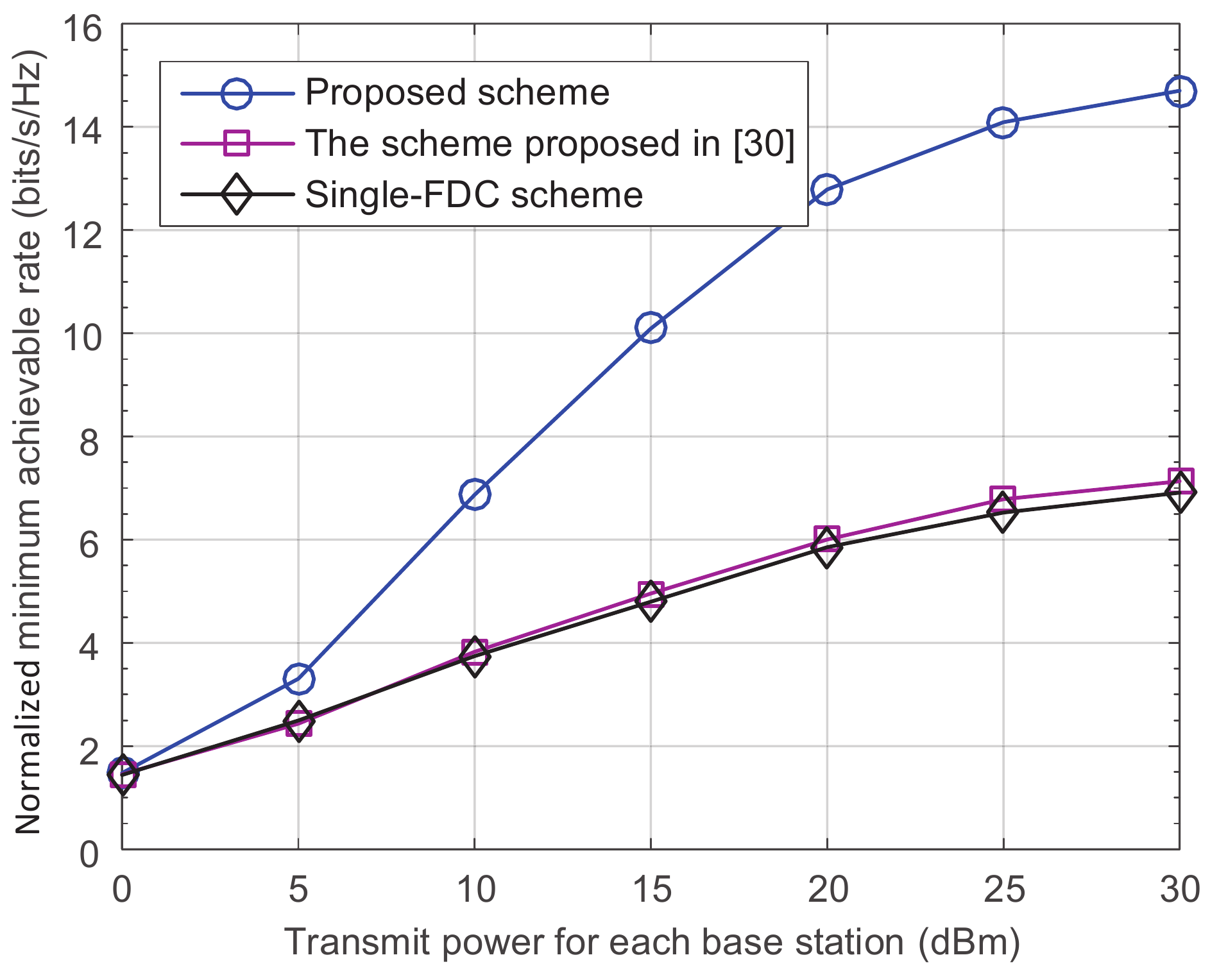}
  \caption{Minimum normalized achievable rate of all the users for $K=10$.}
  \label{fig_min_2}
\end{center}
\end{figure}

We also evaluate the performance of the proposed scheme in terms of system sum rate, i.e., $\sum_{l=1}^K \sum_{n=1}^N \log_2(1+\bar{S}_l^{(n)})$, in Fig.~\ref{fig_sum} for $K=3$. Particularly, the global orthogonal FRB allocation scheme is taken into comparison,
which does not produce any co-channel interference, but at the expense of $N$ times frequency resource usage. We can observe that
frequency reuse among different FDCs is effective for improving the total spectral efficiency of the system.
It can also be seen from the figure that the proposed scheme provides a noticeable performance gain in system efficiency in addition to
user fairness.
Although the scheme in~\cite{r15} focused on the maximization of system sum rate, it was designed based on full CSI, and its performance could be severely degraded when only the large-scale CSI is available. As for the single-FDC scheme, it directly ignored the impact of co-channel interference, leading to a largely-reduced system efficiency.
Basically, the proposed scheme only focuses on promoting the achievable rate of the poorest user, which
will inevitably damage the performance of some related users, although it may simultaneously improve the achievable rate of some other users.
Therefore, the gain of the proposed scheme in terms of system sum rate is not proportional to that in terms of minimum achievable rate.

\begin{figure}[t]
\begin{center}
  \includegraphics[width=8.1cm]{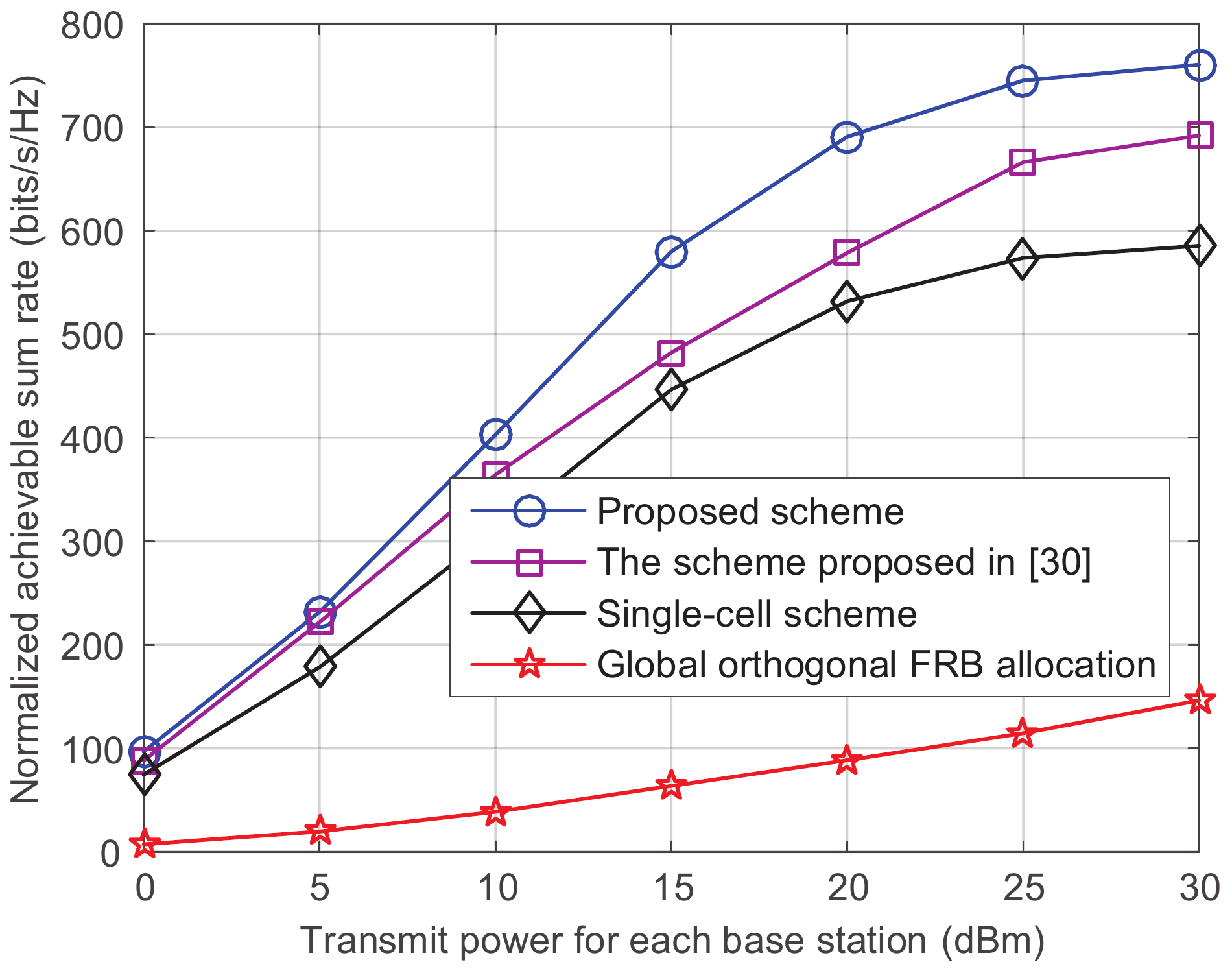}
  \caption{Normalized achievable sum rate for $K=3$.}
  \label{fig_sum}
\end{center}
\end{figure}

\section{Conclusions}
In this paper, we have explored the potential of network densification for providing
a sufficient link margin for mmWave communications. We propose to deploy the mmWave BSs
in an extremely dense and amorphous way, so as to reduce the access distance and give more
serving BS choice for each mobile user. Under this model, the propagation loss and blockages, which are the most challenging hurdles
for mmWave transmissions, can be flexibly mitigated. However, the
co-channel interference becomes a performance-limiting factor. To solve this problem, we have
proposed a large-scale CSI based coordinated FRB allocation scheme, which can maximize the minimum
achievable rate of all the users. Since the large-scale CSI varies slowly and is highly location-dependent, we can
obtain it with a quite low system cost. Therefore, the scalability of the proposed coordination
can be guaranteed, which is crucial for the considered hard-to-decouple amorphous mmWave network.
Simulation results demonstrate that
coordinated FRB allocation based on large-scale CSI only
can still significantly improve the system performance.
Moreover, although the proposed scheme is only guaranteed to converge to a local optimum,
its performance is not far from that of the global optimal solution,
and it performs well in terms of both user fairness and system efficiency.

Undoubtedly, the proposed model requires more BS sites, and leads to the other extra cost, e.g., synchronization among multiple BSs.
Therefore, on the basis of this work, more practical limitation of BS sites as well as more comprehensive cost models can be
taken into consideration, and the practically-optimal BS density of a mmWave network can be investigated in the future.

\appendices
\section{Proof of Theorem~\ref{Theorem_NP}}
When $N=2$, to
solve the problem in~(\ref{eq:e3_3}), we need to match the users into $K$ pairs, each of which
consists of one user in FDC 1 and one user in FDC 2, so that
the minimum of $\bar{S}_l^{(n)}, l=1,..., K,~n=1,...,N,$ can be maximized.
It indicates that we can find an optimal solution of~(\ref{eq:e3_3})
by scheduling the $K$ users in FDC $1$ in the $K$ FRBs in an arbitrary order
and then optimizing the scheduling order of FDC $2$ based on the scheduling order of FDC $1$
according to the objective function of~(\ref{eq:e3_3}).
Based on the analysis in Section IV-A, we can see that~(\ref{eq:e3_3})
can be transformed into a series of LBAPs
through the bisection method and solved in polynomial time
when $N=2$.

Now let us look at the case when $N \geq 3$.

By introducing a slack variable $\gamma$, the problem in (\ref{eq:e3_3}) can be equivalently rewritten as
\begin{subequations}\label{eq:eB_1}
\begin{align}
& \max \,\, \gamma  \\
& s.t. \,\, \sum_{l=1}^{K} z_{k,l}^{(n)}=1, \sum_{k=1}^{K} z_{k,l}^{(n)}=1, z_{k,l}^{(n)}\in \{0,1\},  \\
& ~~~~~~ \bar{S}_l^{(n)} \geq \gamma, \,\, k,l=1,..., K, \,\,n=1,...,N,
\end{align}
\end{subequations}
which can be solved through solving a series of
feasibility subproblems as
\begin{subequations}\label{eq:eB_2}
\begin{align}
& \text{find} \,\, z_{k,l}^{(n)} \\
& s.t. \,\, \sum_{l=1}^{K} z_{k,l}^{(n)}=1, \sum_{k=1}^{K} z_{k,l}^{(n)}=1, z_{k,l}^{(n)}\in \{0,1\}, \\
& ~~~~~~ \bar{S}_l^{(n)} \geq \gamma_0, \,\, k,l=1,..., K, \,\,n=1,...,N,
\end{align}
\end{subequations}
through bisection searching,
just like that in Table~\ref{tb:Algorithm}.
Conversely, if an optimal solution of~(\ref{eq:eB_1}), denoted by $\{ \gamma^\ast, z_{k,l}^{\ast(n)}| k,l=1,...,K,n=1,...,N \}$, is given,
then the problem in~(\ref{eq:eB_2}) can be solved immediately.
If $\gamma^\ast \geq \gamma_0$, then~(\ref{eq:eB_2}) is feasible and $\{ z_{k,l}^{\ast(n)} | k,l=1,...,K,n=1,...,N \}$
is a solution of~(\ref{eq:eB_2}); otherwise, (\ref{eq:eB_2}) is infeasible.
It indicates that~(\ref{eq:eB_2}) can be solved through solving the problem in~(\ref{eq:eB_1}),
and further it can be solved by implementing any algorithm for solving the problem in~(\ref{eq:e3_3}).
Thus, (\ref{eq:eB_2}) is polynomially reducible to (\ref{eq:e3_3})~\cite{r29}.

Further, (\ref{eq:eB_2}) can be equivalently transformed into
\begin{subequations}\label{eq:eB_3}
\begin{align}
&  \text{find} \,\, z_{k,l}^{(n)}  \\
&  s.t. \,\, \sum_{l=1}^{K} z_{k,l}^{(n)}=1, \sum_{k=1}^{K} z_{k,l}^{(n)}=1, z_{k,l}^{(n)}\in \{0,1\},  \\
&  ~~~~~ \gamma_0 \sum_{k=1}^{K} \sum_{m \neq n} \sum_{i=1}^{K} g_{l,k,i}^{(n,m)} z_{i,l}^{(m)} z_{k,l}^{(n)} - \sum_{k=1}^{K} g_{l,k,k}^{(n,n)} z_{k,l}^{(n)}  \nonumber \\
&  ~~~~~\leq - \gamma_0 \frac{N_a\sigma^2}{P}, \,\, k,l=1,..., K, \,\,n=1,...,N.
\end{align}
\end{subequations}
By adding $M \sum_{k=1}^{K} z_{k,l}^{(n)}$ (note that $M \sum_{k=1}^{K} z_{k,l}^{(n)} = M$) and $M$ ( $M > \gamma_0 \frac{N_a\sigma^2}{P}$ ) at the left-hand side and right-hand side of the inequality constraints
in~(\ref{eq:eB_3}c)
respectively,  we can rewrite~(\ref{eq:eB_3}) as
\begin{subequations}\label{eq:eB_4}
\begin{align}
& \text{find} \,\, z_{k,l}^{(n)}  \\
& s.t. \,\, \sum_{l=1}^{K} z_{k,l}^{(n)}=1, \sum_{k=1}^{K} z_{k,l}^{(n)}=1, z_{k,l}^{(n)}\in \{0,1\},  \\
& ~~~~~  \sum_{k=1}^{K} \sum_{m \neq n} \sum_{i=1}^{K} \hat{g}_{l,k,i}^{(n,m)} z_{i,l}^{(m)} z_{k,l}^{(n)} + \sum_{k=1}^{K} \hat{g}_{l,k,k}^{(n,n)} z_{k,l}^{(n)} \nonumber \\
& ~~~~~  \leq 1,  k,l=1,..., K, \,\,n=1,...,N,
\end{align}
\end{subequations}
with
$\hat{g}_{l,k,i}^{(n,m)} = \gamma_0 g_{l,k,i}^{(n,m)} / (M - \gamma_0 \frac{N_a\sigma^2}{P}) \geq 0$
and $\hat{g}_{l,k,k}^{(n,n)} = (M - g_{l,k,k}^{(n,n)})/ (M - \gamma_0 \frac{N_a\sigma^2}{P})$.
Further, we can solve~(\ref{eq:eB_4}) through solving the following problem
\begin{subequations}\label{eq:eB_6}
\begin{align}
& \underset{z_{k,l}^{(n)}} \min \,\, \underset{n,l} \max \,\,\,\,   \sum_{k=1}^{K} \left ( \sum_{m \neq n} \sum_{i=1}^{K} \hat{g}_{l,k,i}^{(n,m)} z_{i,l}^{(m)} +  \hat{g}_{l,k,k}^{(n,n)} \right ) z_{k,l}^{(n)}  \\
& s.t. \,\, \sum_{l=1}^{K} z_{k,l}^{(n)}=1, \sum_{k=1}^{K} z_{k,l}^{(n)}=1,  z_{k,l}^{(n)}\in \{0,1\}, \\
& ~~~~~~ k,l=1,..., K, \,\,n=1,..., N.
\end{align}
\end{subequations}
If the minimum value of the objective function of~(\ref{eq:eB_6}) is less than or equal to $1$,
then~(\ref{eq:eB_4}) is feasible, i.e., we can find a set of $\{ z_{k,l}^{(n)} | k,l=1,...,K, n=1,...,N \}$
satisfying the constraints in~(\ref{eq:eB_4}b) and~(\ref{eq:eB_4}c);
otherwise~(\ref{eq:eB_4}) is infeasible.
Moreover, from the transformations in~(\ref{eq:eB_1}) and~(\ref{eq:eB_2}) and the related analysis,
it is easy to see that~(\ref{eq:eB_6}) can be solved through
solving a series of subproblems as in~(\ref{eq:eB_4}) through bisection searching.
As the transformations from~(\ref{eq:eB_2}) to~(\ref{eq:eB_4}) can be reversed easily,
we can conclude that the problem in~(\ref{eq:eB_6}) can be solved
by implementing any algorithm for solving~(\ref{eq:eB_2}) for a series of times.
Thus, the problem in~(\ref{eq:eB_6}) is polynomially reducible to~(\ref{eq:eB_2}),
and further by the transitivity of polynomial reduction, it is polynomially reducible to~(\ref{eq:e3_3})~\cite{r28}\cite{r29}.

When $N=3$,
we first
prove~(\ref{eq:eB_6}) is NP-hard by presenting a polynomial reduction of
a NP-hard 3BAP to it.
Then, based on the fact that (\ref{eq:eB_6}) is polynomially reducible to (\ref{eq:e3_3}),
we conclude that (\ref{eq:e3_3}) is also a NP-hard problem.
This is because if~(\ref{eq:e3_3}) is polynomially solvable, then
there must be a polynomial-time algorithm for~(\ref{eq:eB_6})~\cite{r29},
and this contradicts with the fact that~(\ref{eq:eB_6}) is NP-hard.

Let us consider a 3BAP as follows. Given three disjoint sets
of size $K$, i.e., $\mathcal{F}_1$, $\mathcal{F}_2$, and $\mathcal{F}_3$, we are to find a subset $\mathcal{G} \subseteq  \mathcal{F}_1 \times \mathcal{F}_2 \times \mathcal{F}_3$
such that $\mathcal{G}$ has $K$ triples with every element in $\mathcal{F}_1 \cup \mathcal{F}_2 \cup \mathcal{F}_3$
occurring in exactly one triple of $\mathcal{G}$, and that the maximum cost $A_{ijk}$ of all the triples $(i,j,k) \in \mathcal{G}$
is minimized. $A_{ijk}$ is defined as $A_{ijk} = \max \{ d_{ij}^{12}, d_{ik}^{13}, d_{jk}^{23}\}$, where $d_{ij}^{mn}$
corresponds to some kind of distance between $i \in  \mathcal{F}_m$ and $j \in  \mathcal{F}_n$ and $d_{ij}^{mn} \geq 0$.
This 3BAP can be formulated as
\begin{subequations}\label{eq:eB_7}
\begin{align}
& \underset{t_{ijk} } \min \,\, \underset{i,j,k} \max \,\,\,\,  A_{ijk} t_{ijk}    \\
& s.t. \,\, \sum_{i=1}^{K} \sum_{j=1}^{K} t_{ijk} = 1, \\
& ~~~~~ \sum_{i=1}^{K} \sum_{k=1}^{K} t_{ijk}=1, \\
& ~~~~~ \sum_{j=1}^{K} \sum_{k=1}^{K} t_{ijk}=1, \\
& ~~~~~~\! t_{ijk} \in \{0,1\}, i,j,k=1,..., K.
\end{align}
\end{subequations}
It has been shown that it is NP-hard~\cite{r24}.
Lemma \ref{lemma_NP} in Appendix C shows that when $N=3$, (\ref{eq:eB_7}) is polynomially reducible to (\ref{eq:eB_6}), which means that
if there is a polynomial-time algorithm for (\ref{eq:eB_6}), then there is also a polynomial-time algorithm for (\ref{eq:eB_7})~\cite{r29}.
Because (\ref{eq:eB_7}) is NP-hard, it is easy to see that (\ref{eq:eB_6}) is also NP-hard when $N=3$.
Thus, the problem in~(\ref{eq:e3_3}) is NP-hard when $N=3$.

For any fixed $N > 3$, we first consider a special case of~(\ref{eq:e3_3}), where
$g_{l,i,j}^{(n,m)}, \, g_{l,i,j}^{(m,n)}$,
and $g_{l,i,j}^{(m_{1},m_{2})}$ are all assumed to be $0$
for any $i,j =1,...,K,n=1,2,3$, and $m,m_1,m_2=4,...,N$, $m_{1} \neq m_{2}$.
In this special case, there is no ICI between the first three FDCs
and the other $N-3$ FDCs, and also there is no ICI between
any two FDCs whose FDC numbers are both larger than $3$.
It is easy to see that in this case, we only need to optimize the FRB allocation indicators for the first three FDCs,
and this degenerates to the case when $N=3$.
This indicates that the problem in~(\ref{eq:e3_3}) with $N=3$ is a special case of that with $N > 3$,
i.e., the problem in~(\ref{eq:e3_3}) with $N=3$ can be solved directly by any algorithm for that with $N>3$.
Thus, the problem in~(\ref{eq:e3_3}) with $N=3$ is polynomially reducible to that with any fixed $N > 3$.
Now we can conclude that the problem~(\ref{eq:e3_3}) is also NP-hard when $N > 3$.

\section{Proof of Lemma~\ref{Bi_LBAP}}
Suppose $M$ is a large enough positive real number so that
\begin{eqnarray}
M + \min \left\{ \min_{k,l,n} \{ \hat{a}_{k,l}^n \}, \, \min_{l,n} \{ \hat{b}_{l}^n \} \right\} > 0.
\end{eqnarray}
By adding $M \sum_{k=1}^{K} z_{k,l}^{(U)} $ and $M$ (note that $M \sum_{k=1}^{K} z_{k,l}^{(U)} = M$) to the left-hand side and right-hand side
of the inequality constraints in~(\ref{eq:e4_8}c), respectively,
we can transform~(\ref{eq:e4_8}) equivalently into
\begin{subequations}\label{eq:eD_1}
\begin{align}
& \text{find} \,\, \{ z_{k,l}^{(U)} \}_{k,l=1,...,K} \\
& s.t. \,\, \sum_{l=1}^{K} z_{k,l}^{(U)}=1, \sum_{k=1}^{K} z_{k,l}^{(U)}=1, z_{k,l}^{(U)}\in \{0,1\},  \\
& \,\,\,\,\,\,\,\,\,\, \sum_{k=1}^{K} c_{k,l}^{n} z_{k,l}^{(U)} \leq 1  ,\,\, k,l=1,..., K, \,\,n=1,..., U,
\end{align}
\end{subequations}
with $c_{k,l}^{n} =  (M + \hat{a}_{k,l}^n) / (M + \hat{b}_{l}^n) > 0$.
Further, we can solve~(\ref{eq:eD_1}) by solving the following problem
\begin{subequations}\label{eq:eD_2}
\begin{align}
& \underset{z_{k,l}^{(U)}} \min \,\, \underset{l,n} \max \,\,\,\,  \sum_{k=1}^{K} c_{k,l}^{n} z_{k,l}^{(U)}  \\
& s.t. \,\, \sum_{l=1}^{K} z_{k,l}^{(U)}=1, \sum_{k=1}^{K} z_{k,l}^{(U)}=1, z_{k,l}^{(U)}\in \{0,1\}, \\
& ~~~~~~\!  k,l=1,..., K, \,\,n=1,..., U.
\end{align}
\end{subequations}
If the minimum value of the objective function of~(\ref{eq:eD_2}) is less than or equal to $1$,
then~(\ref{eq:eD_1}) is feasible, i.e., we can find a set of $\{ z_{k,l}^{(U)} | k,l=1,...,K \}$
satisfying the constraints in~(\ref{eq:eD_1}b) and~(\ref{eq:eD_1}c);
otherwise~(\ref{eq:eD_1}) is infeasible.

Because $c_{k,l}^{n} > 0$ and only one of $z_{k,l}^{(U)}, k=1,...,K$, is 1 with others equal to 0,
we have
\begin{eqnarray}\label{eq:eD_3}
\underset{l,n} \max \,\,  \sum_{k=1}^{K} c_{k,l}^{n} z_{k,l}^{(U)} = \underset{k,l,n} \max \,\,   c_{k,l}^{n} z_{k,l}^{(U)}
= \underset{k,l} \max \,\,   \hat{c}_{k,l} z_{k,l}^{(U)},
\end{eqnarray}
with
\begin{eqnarray}
\hat{c}_{k,l} = \max_{n} c_{k,l}^{n}.
\end{eqnarray}
Thus, we can further transform~(\ref{eq:eD_2}) into an LBAP
\begin{subequations}\label{eq:eD_5}
\begin{align}
& \underset{z_{k,l}^{(U)}} \min \,\, \underset{k,l} \max \,\,\,\,   \hat{c}_{k,l} z_{k,l}^{(U)} \\
& s.t. \,\, \sum_{l=1}^{K} z_{k,l}^{(U)}=1, \sum_{k=1}^{K} z_{k,l}^{(U)}=1, z_{k,l}^{(U)}\in \{0,1\}, \\
& ~~~~~~\! k,l=1,..., K, \,\,n=1,..., U,
\end{align}
\end{subequations}
which is the same with that in~(\ref{eq:e4_10}).

The LBAP in~(\ref{eq:eD_5}) has a $K \times K$ cost matrix $\{ \hat{c}_{k,l} \}_{k,l=1,...,K}$.
From the Theorem $6.4$ in~\cite{r22}, we know that~(\ref{eq:eD_5}) can be solved in
$O(K^{2.5} / \sqrt{\text{log} (K)})$ time.
Thus, we can conclude that the problem in~(\ref{eq:e4_8}) can be solved in $O(K^{2.5} / \sqrt{\text{log} (K)})$ time.

\section{Lemma~\ref{lemma_NP}}
\begin{lemma} \label{lemma_NP}
The problem in (\ref{eq:eB_7}) is polynomially reducible to (\ref{eq:eB_6}) when $N=3$.
\end{lemma}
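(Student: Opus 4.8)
The plan is to give an explicit polynomial-time reduction that turns an arbitrary instance of the three-dimensional axial bottleneck assignment problem~(\ref{eq:eB_7}) into an instance of~(\ref{eq:eB_6}) with $N=3$ and the same size $K$, in such a way that the two problems have the same optimal value and an optimal solution of~(\ref{eq:eB_6}) is converted to an optimal 3BAP assignment in linear time. This is exactly what is required: if~(\ref{eq:eB_6}) admitted a polynomial-time algorithm, then so would~(\ref{eq:eB_7})~\cite{r29}.

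First I would rewrite the objective of~(\ref{eq:eB_6}) in a transparent form. The assignment constraints force, for each FDC $n$, the $K\times K$ matrix $\{z_{k,l}^{(n)}\}$ to be a permutation matrix; write $\pi_n$ for the induced permutation, so that $z_{k,l}^{(n)}=1$ iff $k=\pi_n(l)$. Since for a fixed pair $(n,l)$ only the term $k=\pi_n(l)$ survives the inner sums, the quantity maximized over $(n,l)$ collapses to
\begin{equation*}
\Phi_n(l)=\sum_{m\neq n}\hat{g}_{l,\pi_n(l),\pi_m(l)}^{(n,m)}+\hat{g}_{l,\pi_n(l),\pi_n(l)}^{(n,n)}.
\end{equation*}
The columns of the three permutation matrices group the users into $K$ disjoint triples, and as $(\pi_1,\pi_2,\pi_3)$ ranges over all triples of permutations the collection $\mathcal{G}=\{(\pi_1(l),\pi_2(l),\pi_3(l)):l=1,\dots,K\}$ ranges over \emph{all} perfect three-dimensional matchings of $\mathcal{F}_1\cup\mathcal{F}_2\cup\mathcal{F}_3$; moreover $\max_{n,l}\Phi_n(l)$ depends on the set of triples only, not on the order in which they are listed, so the feasible set and the $\max$-structure of~(\ref{eq:eB_6}) already mirror those of~(\ref{eq:eB_7}).

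Next I would choose the coefficients of~(\ref{eq:eB_6}) so that each $\Phi_n(l)$ equals exactly one of the three pairwise distances appearing in $A_{ijk}=\max\{d_{ij}^{12},d_{ik}^{13},d_{jk}^{23}\}$. Identifying each $\mathcal{F}_m$ with $\{1,\dots,K\}$ and taking all coefficients independent of $l$, I would put
\begin{equation*}
\hat{g}_{l,k,i}^{(1,2)}=d_{ki}^{12},\qquad \hat{g}_{l,k,i}^{(2,3)}=d_{ki}^{23},\qquad \hat{g}_{l,k,i}^{(3,1)}=d_{ik}^{13}\qquad(\text{for all }l),
\end{equation*}
and set every remaining block (namely $\hat{g}^{(1,3)}$, $\hat{g}^{(2,1)}$, $\hat{g}^{(3,2)}$ and the three diagonal blocks $\hat{g}^{(n,n)}$) equal to $0$. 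All off-diagonal coefficients are then nonnegative, so this is a legitimate instance of~(\ref{eq:eB_6}), and it is produced in $O(K^2)$ time. A direct substitution then yields $\Phi_1(l)=d^{12}_{\pi_1(l)\pi_2(l)}$, $\Phi_2(l)=d^{23}_{\pi_2(l)\pi_3(l)}$ and $\Phi_3(l)=d^{13}_{\pi_1(l)\pi_3(l)}$, so the objective value of~(\ref{eq:eB_6}) at $(\pi_1,\pi_2,\pi_3)$ equals $\max_l A_{\pi_1(l),\pi_2(l),\pi_3(l)}=\max_{(i,j,k)\in\mathcal{G}}A_{ijk}$. Minimizing over $z$ is therefore the same as minimizing over $\mathcal{G}$, i.e.\ solving~(\ref{eq:eB_7}); and reading off $t_{ijk}=1$ iff $(i,j,k)\in\mathcal{G}$ recovers an optimal 3BAP solution from an optimal $z$, which completes the reduction.

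The collapse of the objective and the final substitution are routine bookkeeping. The step I expect to require the most care is the choice of the \emph{cyclic} support pattern $(1,2),(2,3),(3,1)$ for the nonzero interference blocks: a non-cyclic choice such as $(1,2),(1,3),(2,3)$ would make some $\Phi_n(l)$ a sum of two distances instead of a single one, destroying the identification with $A_{ijk}$. One should also confirm that nothing is lost by letting $(\pi_1,\pi_2,\pi_3)$ be three independent permutations while a 3BAP is naturally parametrized by only two; this is harmless, because $\max_{n,l}\Phi_n(l)$ is invariant under a common relabelling of the column index $l$, so the extra freedom merely reorders the triples of $\mathcal{G}$.
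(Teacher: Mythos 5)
Your proposal is correct and follows essentially the same route as the paper: both reductions zero out the diagonal blocks and the non-cyclic interference blocks $\hat{g}^{(1,3)},\hat{g}^{(2,1)},\hat{g}^{(3,2)}$, and identify the surviving cyclic blocks with the distances $d^{12},d^{23},d^{13}$ so that the per-user costs become exactly the three terms of $A_{ijk}$. The only (cosmetic) difference is that the paper first recasts~(\ref{eq:eB_6}) as an intermediate 3BAP with costs $B_{ijk}$ and then exhibits~(\ref{eq:eB_7}) as its special case, whereas you compute the collapsed objective directly in terms of the three permutations.
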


\begin{proof}
First we present another formulation for the problem in (\ref{eq:eB_6}) when $N=3$.
In fact, (\ref{eq:eB_6}) can be considered as minimizing the maximum cost of the users
corresponding to different FRB allocation results.
If user $i$ in FDC $1$, user $j$ in FDC $2$, and user $k$
in FDC $3$ are scheduled in the same FRB $l$, then according
to the objective function of (\ref{eq:eB_6}),
the costs of users $i,j,k$ are given by
$\hat{g}_{l,i,j}^{(1,2)} + \hat{g}_{l,i,k}^{(1,3)} + \hat{g}_{l,i,i}^{(1,1)}$, $\hat{g}_{l,j,i}^{(2,1)} + \hat{g}_{l,j,k}^{(2,3)} + \hat{g}_{l,j,j}^{(2,2)}$,
and $\hat{g}_{l,k,i}^{(3,1)} + \hat{g}_{l,k,j}^{(3,2)} + \hat{g}_{l,k,k}^{(3,3)}$, respectively.
To find an optimal solution of (\ref{eq:eB_6}), we
need to match the users into $K$ triples, each of which
consists of one user in FDC 1, one user in FDC 2, and one user
in FDC 3, so that
the maximum cost of the users is minimized.
Assuming that $\{ z_{k,l}^{\ast(n) } | k,l=1,...,K, n=1,2,3 \}$ is an optimal solution of (\ref{eq:eB_6})
and the users $i,j,k$ scheduled in the same FRB forms a tripe $(i,j,k)$,
then for any other feasible solution $\{ z_{k,l}^{(n)} | k,l=1,...,K, n=1,2,3 \}$ of (\ref{eq:eB_6}), as long as
the $K$ triples corresponding to it are the same with those corresponding to $\{ z_{k,l}^{\ast(n) } | k,l=1,...,K, n=1,2,3 \}$,
it is also an optimal solution.
Based on the above analysis, we can solve (\ref{eq:eB_6}) through solving
a 3BAP as
\begin{subequations}\label{eq:eC_1}
\begin{align}
& \underset{t_{ijk} } \min \,\, \underset{i,j,k} \max \,\,\,\,  B_{ijk} t_{ijk}   \\
& s.t. \,\, \sum_{i=1}^{K} \sum_{j=1}^{K} t_{ijk} = 1, \\
& ~~~~~ \sum_{i=1}^{K} \sum_{k=1}^{K} t_{ijk}=1, \\
& ~~~~~ \sum_{j=1}^{K} \sum_{k=1}^{K} t_{ijk}=1, \\
& ~~~~~~\! t_{ijk} \in \{0,1\}, i,j,k=1,..., K,
\end{align}
\end{subequations}
where $B_{ijk} = \max \{ \hat{g}_{l,i,j}^{(1,2)} + \hat{g}_{l,i,k}^{(1,3)} + \hat{g}_{l,i,i}^{(1,1)}, \, \hat{g}_{l,j,i}^{(2,1)} + \hat{g}_{l,j,k}^{(2,3)} + \hat{g}_{l,j,j}^{(2,2)}, \, \hat{g}_{l,k,i}^{(3,1)} + \hat{g}_{l,k,j}^{(3,2)} + \hat{g}_{l,k,k}^{(3,3)} \}$.
If an optimal solution of (\ref{eq:eB_6}) is given,
we can easily find an optimal solution for (\ref{eq:eC_1}) by setting $t_{i,j,k}=1$
as long as the users $i,j,k$ are scheduled in the same FRB.
Thus, the problem in~(\ref{eq:eC_1}) can be solved by any algorithm for~(\ref{eq:eB_6}).
Consequently, we know that (\ref{eq:eC_1}) is polynomially reducible to~(\ref{eq:eB_6}).

Next we show that (\ref{eq:eB_7}) is polynomially reducible to (\ref{eq:eC_1}),
and then by the transitivity of polynomial-time reduction, we can claim that
(\ref{eq:eB_7}) is polynomially reducible to (\ref{eq:eB_6}).

Notice that if $\hat{g}_{l,i,k}^{(1,3)}, \, \hat{g}_{l,j,i}^{(2,1)}, \, \hat{g}_{l,k,j}^{(3,2)}$,
and $\hat{g}_{l,k,k}^{(1,1)}, \, \hat{g}_{l,k,k}^{(2,2)}, \, \hat{g}_{l,k,k}^{(3,3)}$
are all equal to $0$
and we define
\begin{eqnarray}\label{eq:eApp_B-2}
d_{ij}^{12} \triangleq \hat{g}_{l,i,j}^{(1,2)}, \, d_{jk}^{23} \triangleq \hat{g}_{l,j,k}^{(2,3)}, \,
d_{ik}^{13} \triangleq \hat{g}_{l,k,i}^{(3,1)},
\end{eqnarray}
then the problem in (\ref{eq:eC_1}) is the same with (\ref{eq:eB_7}).
This means that (\ref{eq:eB_7}) is a special case of (\ref{eq:eC_1}).
Thus, if there is an algorithm for solving (\ref{eq:eC_1}),
then we can use it to solve (\ref{eq:eB_7}) directly,
which indicates that (\ref{eq:eB_7}) is polynomially reducible to (\ref{eq:eC_1}).
\end{proof}


\begin{thebibliography}{10}
\providecommand{\url}[1]{#1} \csname url@rmstyle\endcsname
\providecommand{\newblock}{\relax}
\providecommand{\bibinfo}[2]{#2}
\providecommand\BIBentrySTDinterwordspacing{\spaceskip=0pt\relax}
\providecommand\BIBentryALTinterwordstretchfactor{4}
\providecommand\BIBentryALTinterwordspacing{\spaceskip=\fontdimen2\font
plus \BIBentryALTinterwordstretchfactor\fontdimen3\font minus
  \fontdimen4\font\relax}
\providecommand\BIBforeignlanguage[2]{{%
\expandafter\ifx\csname l@#1\endcsname\relax
\typeout{** WARNING: IEEEtran.bst: No hyphenation pattern has been}%
\typeout{** loaded for the language `#1'. Using the pattern for}%
\typeout{** the default language instead.}%
\else \language=\csname l@#1\endcsname \fi #2}}

\balance

\bibitem{r01}
J. G. Andrews, S. Buzzi, W. Choi, S. V. Hanly, A. Lozano, A. C. K. Soong, and J. C. Zhang,
``What will 5G be?,''
\emph{{IEEE} J. Sel. Areas Commun.}, vol.~32, no.~6, pp. 1065--1082, Jun. 2014.

\bibitem{r02}
T. S. Rappaport, S. Sun, R. Mayzus, H. Zhao, Y. Azar, K. Wang, G. N. Wong, J. K. Schulz, M. Samimi, and F. Gutierrez,
``Millimeter wave mobile communications for 5G cellular: it will work!,''
\emph{{IEEE} Access}, vol.~1, pp. 335--349, 2013.

\bibitem{rxiao1}
G. Yang, J. Du and M. Xiao, ``Maximum throughput path selection with random blockage for indoor 60 GHz relay networks,''
\emph{IEEE Trans. Commun.}, vol.~63, no.~10, pp. 3511--3524, Oct. 2015.

\bibitem{rxiao2}
Q. Xue, X. Fang, M. Xiao, and L. Yan, ``Multi-user millimeter wave communications with nonorthogonal beams,''
\emph{IEEE Trans. Veh. Tech.}, vol.~PP, no.~99, pp. 1--1, Oct. 2016.

\bibitem{r03}
A. Sulyman, A. Nassar, M. Samimi, G. Maccartney, T. Rappaport, and A. Alsanie, ``Radio propagation path loss models for
5G cellular networks in the 28 GHZ and 38 GHZ millimeter-wave bands,''
\emph{IEEE Commun. Mag.}, vol.~52, no.~9, pp. 78--86, Sep. 2014.

\bibitem{rdai1}
X. Gao, L. Dai, S. Han, C.-L. I, and R. W. Heath, ``Energy-efficient hybrid analog and digital precoding for mmWave MIMO systems with large antenna arrays,''
\emph{{IEEE} J. Sel. Areas Commun.}, vol.~34, no.~4, pp. 998--1009, Apr. 2016.

\bibitem{rxiao3}
G. Yang, M. Xiao, J. Gross, H. Al-Zubaidy and Y. Huang, ``Delay and backlog analysis for 60 GHz wireless networks,''
in Proc. \emph{IEEE Global Commun. Conf. (GLOBECOM)}, pp. 1--7, Dec. 2016.

\bibitem{r04}
T. Bai, A. Alkhateeb, and R. W. Heath, ``Coverage and capacity of
millimeter-wave cellular networks,'' \emph{IEEE Commun. Mag.}, vol.~52, no.~9, pp. 70--77, Sep. 2014.

\bibitem{r05}
Z. Pi and F. Khan, ``An introduction to millimeter-wave
mobile broadband systems,''
\emph{IEEE Commun. Mag.}, vol.~49, no.~6, pp. 101--107, Jun. 2011.

\bibitem{r06}
A. Alkhateeb, O. Ayach, G. Leus, and R. W. Heath, ``Hybrid precoding for millimeter wave cellular
systems with partial channel knowledge,'' in Proc.
\emph{Info. Theory App. Workshop (ITA)}, pp. 1--5, Feb. 2013.

\bibitem{r07}
A. Alkhateeb, O. Ayach, G. Leus, and R. W. Heath,
``Channel estimation and hybrid precoding for millimeter wave cellular systems,''
\emph{{IEEE} J. Sel. Topics Sig. Process.}, vol.~8, no.~5, pp. 831--846, Oct. 2014.

\bibitem{r08}
Y. Lee, C. Wang, and Y. Huang, ``A hybrid RF/baseband precoding processor based on parallel-index-selection matrix-inversion-bypass simultaneous orthogonal matching pursuit for millimeter wave MIMO systems,''
\emph{{IEEE} Trans. Sig. Process.}, vol.~63, no.~2, pp. 305--317, Jan. 2015.

\bibitem{r09}
X. Yu, C. Shen, J. Zhang, and K. B. Letaief,
``Alternating minimization algorithms for hybrid precoding in millimeter wave MIMO systems,''
\emph{{IEEE} J. Sel. Areas Commun.}, vol.~10, no.~3, pp. 485--500, Apr. 2016.

\bibitem{r010}
A. Ahmed and R. W. Heath, ``Frequency selective hybrid precoding for limited feedback millimeter wave systems,''
\emph{{IEEE} Trans. Commun.}, vol.~64, no.~5, pp. 1801--1818, May 2016.

\bibitem{r011}
M. R. Akdeniz, Y. Liu, M. K. Samimi, S. Sun, S. Rangan, T. S. Rappaport, and E. Erkip,
``Millimeter wave channel modeling and cellular capacity evaluation,''
\emph{{IEEE} J. Sel. Areas Commun.}, vol.~32, no.~6, pp. 1164--1179, Jun. 2014.

\bibitem{r016}
D. Maamari, N. Devroye, and D. Tuninetti, ``Coverage in mmWave cellular networks with base station co-operation,''
\emph{{IEEE} Trans. Wireless Commun.}, vol.~15, no.~4, pp. 2981--2994, Apr. 2016.

\bibitem{r014}
R. Baldemair, T. Irnich, K. Balachandran, E. Dahlman, G. Mildh, Y. Sel¨¦n, S. Parkvall, M. Meyer, and A. Osseirann, ``Ultra-dense
networks in millimeter-wave frequencies,''
\emph{IEEE Commun. Mag.}, vol.~53, no.~1, pp. 202--208, Jan. 2015.

\bibitem{r015}
H. Elshaer, M. N. Kulkarni, F. Boccardi, J. G. Andrews, and M. Dohler, ``Downlink and uplink cell association with traditional
macrocells and millimeter wave small cells,''
\emph{{IEEE} Trans. Wireless Commun.}, vol.~15, no.~9, pp. 6244--6258, Sep. 2016.

\bibitem{r017}
K. Venugopal, M. C. Valenti, and R. W. Heath, ``Device-to-device millimeter wave communications: Interference, coverage, rate, and finite topologies,''
\emph{{IEEE} Trans. Wireless Commun.}, vol.~15, no.~9, pp. 6175--6188, Sep. 2016.

\bibitem{r3}
A. A. M. Saleh, A. J. Rustako, and R. S. Roman, ``Distributed
antennas for indoor radio communications,''
\emph{{IEEE} Trans. Commun.}, vol.~35, no.~12, pp. 1245--1251, Dec. 1987.

\bibitem{r4}
W. Roh and A. Paulraj, ``Outage performance of the distributed antenna systems
in a composite fading channel,'' in Proc.
\emph{IEEE Veh. Tech. Conf. (VTC)}, pp. 1520--1524, Sep. 2002.

\bibitem{rfeng5}
W. Feng, Y. Li, S. Zhou, J. Wang, and M. Xia, ``Downlink capacity of distributed antenna systems in a multi-cell environment,''
in \emph{Proc. IEEE Wireless Commun. Netw. Conf. (WCNC)}, pp. 1--5, Apr. 2009.

\bibitem{rfeng1}
W. Feng, Y. Li, J. Gan, S. Zhou, J. Wang, and M. Xia, ``On the deployment of
antenna elements in generalized multi-user distributed antenna systems,''
\emph{ACM Mobile Netw. Appl.}, vol.~16, no.~1, pp. 35--45, Feb. 2011.

\bibitem{r012}
C. S. Lee and W. H. Chung, ``Max-min hybrid precoding in millimeter wave cooperative MISO systems,''
in Proc. \emph{IEEE Intern. Conf. Commun. (ICC)},
pp. 1--6, May 2016.

\bibitem{r013}
T. Bai and R. W. Heath, ``Coverage and rate analysis for millimeter-wave cellular networks,''
\emph{{IEEE} Trans. Wireless Commun.}, vol.~14, no.~2, pp. 1100--1114, Feb. 2015.

\bibitem{r8}
J. G. Andrews, W. Choi, and R. W. Heath, ``Overcoming interference in spatial multiplexing MIMO cellular networks,''
\emph{{IEEE} Wireless Commun.}, vol.~14, no.~6, pp. 95--104, Dec. 2007.

\bibitem{r12}
D. Gesbert, S. Hanly, H. Huang, S. Shamai Shitz, O. Simeone, and W. Yu, ``Multi-cell MIMO cooperative networks: a new look at interference,''
\emph{{IEEE} J. Sel. Areas Commun.}, vol.~28, no.~9, pp. 1380--1408, Dec. 2010.

\bibitem{r13}
W. Choi and J. G. Andrews, ``The capacity gain from intercell scheduling
in multi-antenna systems,''
\emph{IEEE Trans. Wireless Commun.}, vol.~7, no.~2, pp. 714--725, Feb. 2008.

\bibitem{r14}
D. Gesbert and M. Kountouris, ``Rate scaling laws in multicell networks under distributed power control and user scheduling,''
\emph{IEEE Trans. Inf. Theory}, vol.~57, no.~1, pp. 234--244, Jan. 2011.

\bibitem{r15}
S. G. Kiani and D. Gesbert, ``Optimal and distributed scheduling for
multicell capacity maximization,''
\emph{IEEE Trans. Wireless Commun.}, vol.~7, no.~1, pp. 288--297, Jan. 2008.

\bibitem{r16}
W. Yu, T. Kwon, and C. Shin, ``Joint scheduling and dynamic
power spectrum optimization for wireless multicell networks,'' in Proc.
\emph{Conf. Info. Scien. Systems (CISS)}, pp. 1--6, 2010.

\bibitem{rfeng2}
Y. Wang, W. Feng, L. Xiao, Y. Zhao, and S. Zhou, ``Coordinated multi-cell
transmission for distributed antenna systems with partial CSIT,''
\emph{IEEE Commun. Lett.}, vol.~16, no.~7, pp. 1044--1047, Jul. 2012.

\bibitem{rfeng3}
W. Feng, Y. Wang, N. Ge, J. Lu, and J. Zhang, ``Virtual MIMO in multi-cell distributed antenna systems: coordinated transmissions with large-scale CSIT,''
\emph{IEEE J. Sel. Areas Commun.}, vol.~31, no.~10, pp. 2067--2081, Oct. 2013.

\bibitem{r18}
D. Lee, H. Seo, B. Clerckx, E. Hardouin, D. Mazzarese, S. Nagata, and K. Sayana, ``Coordinated multipoint transmission
and reception in LTE-Advanced: deployment scenarios and operational challenges,''
\emph{IEEE Commun. Mag.}, vol.~50, no.~2, pp. 148--155, Feb. 2012.

\bibitem{r19}
Y. Wang, W. Feng, Y. Li, S. Zhou, and J. Wang, ``Coordinated user scheduling for multi-cell
distributed antenna systems,'' in Proc.
\emph{IEEE Global Commun. Conf. (GLOBECOM)}, pp. 1--5, Dec. 2011.

\bibitem{rfeng4}
Y. Wang, W. Feng, Y. Zhao, S. Zhou, and J. Wang, ``Joint power allocation for multi-cell
distributed antenna systems with large-scale CSIT,''
in Proc. \emph{IEEE Intern. Conf. Commun. (ICC)},
pp. 5264--5269, Jun. 2012.

\bibitem{r17}
G. Caire, S. A. Ramprashad, and H. C. Papadopoulos, ``Rethinking network MIMO: Cost of CSIT, performance analysis,
and architecture comparisons,'' in Proc. \emph{Inf. Theory App. Workshop}, pp.
1--10, Feb. 2010.

\bibitem{r20}
H. Mehrpouyan, M. Matthaiou, R. Wang, G. K. Karagiannidis, and Y. Hua, ``Hybrid millimeter-wave systems: a novel paradigm for HetNets,''
\emph{IEEE Commun. Mag.}, vol.~53, no.~1, pp. 216--221, Jan. 2015.

\bibitem{rfeng6}
W. Feng, N. Ge, and J. Lu, ``Hierarchical transmission optimization for
massively dense distributed antenna systems,''
\emph{IEEE Commun. Lett.}, vol.~19, no.~4, pp. 673--676, Apr. 2015.

\bibitem{rzhou}
L. Zhou, R. Hu, Y. Qian, and H.-H. Chen, ``Energy-spectrum efficiency tradeoff for video streaming over mobile ad hoc networks,''
\emph{{IEEE} J. Sel. Areas Commun.}, vol.~31, no.~5, pp. 981--991, May 2013.

\bibitem{rfeng7}
W. Feng, Y. Chen, N. Ge, and J. Lu, ``Optimal energy-efficient power allocation for distributed antenna systems with imperfect CSI,''
\emph{IEEE Trans. Veh. Tech.}, vol.~65, no.~9, pp. 7759--7763, Sep. 2016.

\bibitem{rfeng8}
W. Feng, Y. Chen, R. Shi, N. Ge, and J. Lu, ``Exploiting macrodiversity in massively distributed antenna systems: a controllable coordination perspective,''
\emph{IEEE Trans. Veh. Tech.}, vol.~65, no.~10, pp. 8720--8724, Oct. 2016.

\bibitem{r22}
R. Burkard, M. Dell'Amico, and S. Martello, \emph{Assignment problems},
Society for Industrial and Applied Mathematics, 2009.

\bibitem{r23}
R. E. Burkard and E. Cela, ``Linear assignment problems and extensions.''
\emph{Handbook of Combinatorial Optimization}, pp. 75--149, 1999.

\bibitem{r24}
D. Goossens, S. Polyakovskiy, F. C. R. Spieksma, and G. J. Woeginger, ``The approximability of three-dimensional assignment problems with bottleneck objective,''
\emph{Optimization Lett.}, vol.~4, pp. 7--16, Jan, 2010.

\bibitem{r25}
Y. Crama, A. G. Oerlemans, and F. C. R. Spieksma, ``Approximation algorithms for three-dimensional assignment problems with triangle inequalities,''
\emph{Production Planning in Automated Manufacturing}, pp.~47--61, 1996.

\bibitem{r26}
B. Klinz and G. J. Woeginger, ``A new efficiently solvable special case of the
three-dimensional axial bottleneck assignment problem,''
\emph{Combinatorics and Computer Science, Lecture Notes in Computer Science}, vol.~1120, pp. 150--162, 1996.

\bibitem{r27}
S. Boyd and L. Vandenberghe, \emph{Convex Optimization},
Cambridge University Press, 2004.

\bibitem{r28}
Z. Luo and S. Zhang, ``Dynamic spectrum management: complexity and duality,''
\emph{IEEE J. Sel. Topics Signal Process.}, vol.~2, no.~1, pp.~57--73, Feb. 2008.

\bibitem{r29}
C. H. Papadimitriou and K. Steiglitz, \emph{Combinatorial Optimization: Algorithms and Complexity},
Prentice Hall, Inc., New Jersey, 1998.

\end{thebibliography}
\end{document}